\newtheorem{theorem}{Theorem}
\newtheorem{corollary}[theorem]{Corollary}
\newtheorem{proposition}[theorem]{Proposition}
\newtheorem{remark}[theorem]{Remark}
\numberwithin{equation}{section}
\newcommand{\biggg}[1]{{\hbox{$\left#1\vbox to 20.5pt{}\right.\n@space$}}}
\newcommand{\Biggg}[1]{{\hbox{$\left#1\vbox to 23.5pt{}\right.\n@space$}}}
\newcommand{\bigggg}[1]{{\hbox{$\left#1\vbox to 26.5pt{}\right.\n@space$}}}
\newcommand{\Bigggg}[1]{{\hbox{$\left#1\vbox to 29.5pt{}\right.\n@space$}}}
\newcommand{\biggggg}[1]{{\hbox{$\left#1\vbox to 32.5pt{}\right.\n@space$}}}
\newcommand{\Biggggg}[1]{{\hbox{$\left#1\vbox to 35.5pt{}\right.\n@space$}}}
\newcommand{\bigggggg}[1]{{\hbox{$\left#1\vbox to 38.5pt{}\right.\n@space$}}}
\newcommand{\Bigggggg}[1]{{\hbox{$\left#1\vbox to 41.5pt{}\right.\n@space$}}}
\title{Estimating the Hurst parameter from the zero vanna implied volatility and its dual}
\author{Elisa Al\`{o}s\thanks{Department of Economics and Business, University Pompeu Fabra, and Barcelona GSE. Supported by grant MEC MTM 2016-76420-P} \quad Frido Rolloos \quad Kenichiro Shiraya\thanks{Graduate School of Economics, The University of Tokyo. Supported by CARF.}}
\begin{document}
\maketitle

\begin{abstract}
The covariance between the return of an asset and its realized volatility can be approximated as the difference between two specific implied volatilities. In this paper it is proved that in the small time-to-maturity limit the approximation error tends to zero. In addition a direct relation between the short time-to-maturity covariance and slope of the at-the-money implied volatility is established. The limit theorems are valid for stochastic volatility models with Hurst parameter $H \in (0,1)$. An application of the results is to accurately approximate the Hurst parameter using only a discrete set of implied volatilities. Numerical examples under the rough Bergomi model are presented. 

\bigskip

Keywords: Malliavin calculus, fractional volatility models, implied volatility, skew, covariance.

AMS subject classification: 91G99
\end{abstract}

\section{Introduction}

In Rolloos \cite{R1} an approximation is derived that relates the covariance between price returns and realized volatility to the difference between two specific implied volatilities:
\begin{equation}\label{rolloos}
E_t \left[ \left( \frac{S_T}{S_t} - 1 \right) \sqrt{ \frac{1}{T-t} \int_t^T \sigma_u^2 du } \, \right] \approx  I(k_{t,T}^+) - I(k_{t,T}^-) \approx I^2(k_{t,T}^*)(T-t) \frac{ \partial I(k_{t,T}^*)}{ \partial k}
\end{equation}
Here $S_t$ denotes the asset price, $\sigma_t$ its instantaneous volatility, $I(k_{t,T}^-) $ the implied volatility (IV) corresponding to the (log)strike where the Black-Scholes-Merton (BS) vanna and volga of a vanilla option with maturity date $T$ is zero, $I(k_{t,T}^+)$ the IV corresponding to the strike where the BS volga is zero but the BS vanna is nonzero, and $I(k_{t,T}^*)$ is the at-the-money forward (ATM) IV. Notice that the strikes $k_{t,T}^{\pm}, k_{t,T}^*$ are floating strikes, not fixed strikes.

The approximations \eqref{rolloos} are of use to the practitioner for several reasons. For example, in contrast to other more ad-hoc measures of skew such as the difference between the IVs of 90\% and 110\% moneyness, the difference between $I(k_{t,T}^+)$ and $I(k_{t,T}^-)$ has a clear link to covariance. This means that by observing the difference between the two implied volatilities $I(k_{t,T}^{\pm})$ the practitioner can obtain an estimate for the implied covariance between price return and realized volatility. They can then compare the implied measure to subsequent realized covariance and infer whether on average the skew contains a premium. Furthermore, since for small times to maturity the covariance is directly proportional to the ATM slope the approximation complements other results that relates the ATM skew to statistical measures. One such result was obtained by Backus et al. \cite{BFW} which states that for short times to maturity the ATM skew is approximately the skewness of the distribution of the asset log returns.  

As we shall demonstrate the approximations \eqref{rolloos} also enable accurate approximation of the Hurst parameter $H$ using only a discrete set of IVs. To see this recall that as has been shown by Al\`os et al. \cite{ALV} and Fukasawa \cite{F1,F2} the short time-to-maturity $(T-t \ll 1)$ ATM skew has the following property:
\begin{equation}\label{ALVF}
\frac{ \partial I(k_{t,T}^*)}{ \partial k} \propto (T-t)^{H - \frac12}.
\end{equation}
It follows from the above and \eqref{rolloos} that
\begin{equation}\label{Happrox}
H \approx -\frac12 + \frac{ \ln \left( \frac{I(k_{t,T_1}^+) - I(k_{t,T_1}^-) }{I(k_{t,T_2}^+) - I(k_{t,T_2}^-)}\frac{I^2(k_{t,T_2}^*)}{I^2(k_{t,T_1}^*)} \right) }{ \ln \left( \frac{T_1 - t}{T_2-t} \right) }
\end{equation}

This approximation for the Hurst parameter is appealing because it does not depend on specific model parameters. Hence, it can be used without first calibrating a specific stochastic volatility model. If volatility is indeed driven by fractional noise, as put forward by among others Comte and Renault \cite{CR} for $H > 1/2$ and Al\`os, Le\'on and Vives \cite{ALV} for $H < 1/2$, then the approximation \eqref{Happrox} provides a straightforward way to estimate the Hurst parameter from the IV surface. Note also that our estimation of $H$ is easier to carry out than using short-dated volatility swaps (Al\`os and Shiraya \cite{AS}) since volatility swaps are illiquid.

In spite of the practical usefulness of the approximations, Rolloos does not give any insight into the approximation errors. This is due to the fact that the approximations were derived by Taylor expansions of the `mixing' formula due to Hull and White \cite{HW}, Romano and Touzi \cite{RoT}, and Willard \cite{W}. Even though a Taylor expansion is straightforward, in this particular case it is unable to give information on the approximation errors. In contrast, techniques from Malliavin calculus have been shown to be quite powerful in proving limit theorems and providing error bounds. This is the main contribution of our paper: by using techniques from Malliavin calculus we will prove limit theorems for the approximations \eqref{rolloos}.

The paper is structured as follows. In Section 2 the main assumptions are stated and notation introduced. This is followed by Section 3 in which the main limit theorems concerning the difference between an arbitrary IV and the volatility swap strike, and the difference between $I(k_{t,T}^+)$ and $I(k_{t,T}^-)$ are proved. The approximations stated above then follow from the main theorems. In Section 4 we present numerical examples where the rough Bergomi model is taken as benchmark skew generating model. The accuracy of approximations \eqref{rolloos} and \eqref{Happrox} are then examined. The final section concludes.

\section{The main problem and notations}

This paper assumes a stochastic volatility model for a log asset price $X_t := \log S_t$ under the risk-neutral probability measure $P$.
The dynamics is described by the following stochastic differential equation:
\begin{equation}
\label{themodel}
X_{t}=X_0-\frac{1}{2}\int_{0}^{t}{\sigma _{s}^{2}}ds+\int_{0}^{t}\sigma
_{s}\left( \rho dW_{s}+\sqrt{1-\rho ^{2}}dB_{s}\right) ,\quad t\in \lbrack
0,T].
\end{equation}
In this formulation $X_0$ represents the initial log asset price and the dynamics are driven by two independent standard Brownian motions, $W$ and $B$, on a complete probability space $(\Omega ,\mathcal{G},P)$. 
The filtrations generated by $W$ and $B$ are denoted by $\mathcal{F}^{W}$ and $\mathcal{F}^{B}$, respectively, with $\mathcal{F}$ representing their joint filtration, $\mathcal{F}:=\mathcal{F}^ {W}\vee \mathcal{F}^{B}$. 
The volatility component, $\sigma$, is a square-integrable and right-continuous stochastic process that is adapted to $\mathcal{F}^{W}$. 
For the sake of analytical simplicity, we assume a zero interest rate ($r=0$). However, the core arguments of this study remain valid even when $r \neq 0$.

Under this framework the value of a European call option with strike price $K$ is determined by its conditional expectation, as expressed by the following equality:
\[
V_{t}=E_{t}[(e^{X_{T}}-K)_{+}],
\]%
where $E_{t}$ signifies the $\mathcal{F}_{t}$-conditional expectation with respect to measure $P$.

The quantity $v_t=\sqrt{\frac{1}{T-t}\int_{t}^{T}\sigma _{u}^{2}du}$ s the future average volatility over the time to maturity of the option, and this process is not adapted. 
Its conditional expectation, $E_{t}\left[v_{t}\right]$, is known as the fair strike of a volatility swap with maturity $T$.

The price of a European call option in the Black-Scholes model is given by the function $BS(t,T,x,k,\sigma )$, assuming constant volatility $\sigma$, an initial asset price $e^x$, a time to maturity $T-t$, and a strike price $K=\exp(k) $. 
For the special case of zero interest rate we recall that the formula is:
\[
BS(t,T,x,k,\sigma )=e^{x}N(d_1(k,\sigma ))-e^{k}N(d_2(k,\sigma )),
\]%
Here, $N$ is the cumulative distribution function for a standard normal variable, with the parameters $d_1$ and $d_2$ defined as:
\[
d_1\left( k,\sigma \right) :=\frac{x-k}{\sigma \sqrt{T-t}}+ \frac{\sigma }{2}\sqrt{T-t}, \hspace{0.4cm} 
d_2\left( k,\sigma \right) :=\frac{x-k}{\sigma \sqrt{T-t}}- \frac{\sigma }{2}\sqrt{T-t}.
\]%
For convenience we occasionally simplify the notation by writing 
\[
BS(k,\sigma):=BS(t,T,X_t,k,\sigma ).
\]

We also define the inverse Black-Scholes function, $BS^{-1}(t,T,x,k, \cdot)$, with respect to the volatility parameter such that 
\[
BS(t,T,x,k, BS^{-1}( t,T,x,k,\lambda) )=\lambda,
\]
for all $\lambda>0$. 
On occasion we use the following compact notation:
\[
BS^{-1}(k,\lambda)\ :=BS^{-1}(t,T,X_{t},k,\lambda).
\]
 
The implied volatility $I(t,T,X_{t},k)$ is defined for any fixed set of parameters $t,T,X_{t},k$ as the unique volatility value that equates the Black-Scholes price to the market price:
\[
BS( t,T,X_{t},k,I( t,T,X_{t},k) ) =V_{t}.
\] 
It follows directly that 
\[
I(t,T,X_t,k)=BS^{-1}( t,T,X_t,k,V_t).
\]

We define $k_t^* := X_t$ is the ATM strike at time $t$ and $I(t,T,X_t,k^*_t)$ the ATM IV. When interest rate is nonzero then the ATM strike depends on time to maturity $T$ as well. However, under our assumptions there is no dependence on $T$.

The zero vanna strike at time $t$ for maturity date $T$ is the strike $k^-_{t,T}$ for which the Black-Scholes vanna vanishes, i.e.,
\[
d_2(k^-_{t,T},I(t,T,X_t,k^-_{t,T}))=0.
\] 
The Black-Scholes vanna is defined as the first-order partial derivative of the option's delta with respect to the implied volatility and is proportional to $d_2$. 
The implied volatility $I(t,T,X_t,k^-_{t,T})$ is referred to as the zero vanna implied volatility.  A similar definition applies to $I(t,T,X_t,k^+_{t,T})$ and $k^+_{t,T}$, where $d_1(k^+_{t,T},I(t,T,X_t,k^+_{t,T}))=0$. The strike $k^+_{t,T}$ is called the dual zero vanna strike and the corresponding implied volatility $I(t,T,X_t,k^+_{t,T})$ is the dual zero vanna implied volatility.

Even when interest rate and dividend yield are zero the zero vanna strike and its dual depend on time to maturity $T$. For the sake of notational economy, when not considering multiple maturities we shall henceforth drop the dependence of these strikes on $T$ and only reintroduce it in the section on numerical results.

We introduce also the following notations:
\[
\Lambda^-_{r}:=E_{r}\left[ BS( t,T,X_t,k^-_{t,T},v_{t}) \right], \hspace{0.4cm} 
\Lambda^+_{r}:=E_{r}\left[ BS( t,T,X_t,k^+_{t,T},v_{t}) \right].
\]
We also define $\Theta_{r}(k):=BS^{-1}(k,\Lambda_r)$, and note that 
\[
\Theta_{t}(k_{t,T})=I(t,X_t,k_{t,T},V_t), \hspace{0.4cm} 
\Theta_{T}(k^{\pm}_{t,T})=v_t.
\]
Finally, we define the following operators on the Black-Scholes function:
\[
G(t,T,x,k,\sigma ):= \left( \frac{\partial ^{2}}{\partial x^{2}}-\frac{\partial}{\partial x} \right) BS(t,T,x,k,\sigma ), \hspace{0.4cm} H(t,T,x,k,\sigma ):= \left( \frac{\partial ^{3}}{\partial x^{3}}-\frac{\partial ^{2}}{\partial x^{2}} \right) BS(t,T,x,k,\sigma ).
\]

The methodology for the remainder of this paper is grounded in Malliavin calculus. The domain of the Malliavin derivative operator $D^{W}$ with respect to the Brownian motion $W$ is denoted by $\mathbb{D}_{W}^{1,2}$. For $n>1$, the domains for the iterated derivatives $D^{n,W}$ is denoted by $\mathbb{D}_{W}^{n,2}$. We also utilize the notation $\mathbb{L}_{W}^{n,2}=L^{2}\left(\left[ 0,T\right] ;\mathbb{D}_{W}^{n,2}\right)$.

\section{Limit theorems}\label{sec3}

In this section the main assumptions and theorems are stated. Due to the length of the proofs of the theorems they have been placed in the appendix. The proofs of the corollaries are sufficiently short to be kept in the main text. Let us then first state our hypotheses:

\begin{description}
\item[(H1)] There exist two positive constants $a,b$ such that $a\leq \sigma
_{t}\leq b,$ for all $t\in \left[ 0,T\right] .$

\item[(H2)]  $\sigma\in \mathbb{L}^{3,2}_W$ and there exists two constants $\nu>0$ and $H\in (0,1)$ such that, for all $0<r<u,s,\theta<T$
$$
|E_r[D_r^W\sigma_s^2]|\leq \nu(s-r)^{H-\frac12}, \hspace{0.3cm} |E_r[D_\theta^WD_r^W\sigma_s^2]|\leq \nu^2(s-r)^{H-\frac12}(s-\theta)^{H-\frac12},
$$
and 
$$
|E_r[D_u^W D_\theta^W D_r^W\sigma_s^2]|\leq \nu^3(s-r)^{H-\frac12}(s-\theta)^{H-\frac12}(s-u)^{H-\frac12}.
$$
\item[(H3)] Hypotheses (H1), (H2'), hold and the terms 
\[
\frac{1}{(T-t)^{3+2H}}E_{t} \left[\left(\int_{t}^{T}
\int_{s}^{T}D_{s}^{W}\sigma _{r}^{2}dr ds\right) ^{2}\right] ,
\]%
\[
\frac{1}{(T-t)^{2+2H }}E_{t}\left[
\int_{t}^{T} \int_{s}^{T}D_{s}^{W}\sigma_{r} \int_{r}^{T}D_{r}^{W}\sigma _{u}^{2}du dr ds
\right],
\]%
\[
\frac{1}{(T-t)^{2+2H }}E_{t}\left[
\int_{t}^{T} \int_{s}^{T} \int_{r}^{T}D_{s}^{W}D_{r}^{W}\sigma_{u}^{2}du dr ds\right]
\]%
have a finite limit as $T\rightarrow t.$

\end{description}

The following result follows from the same argument as Proposition 4.1 in Al\`os and Shiraya (2019) and serves as the main tool in this section.
\begin{proposition}\label{Theoremcorrelatedcase}
Consider the model \eqref{themodel} and assume
that hypotheses (H1), (H2) and hold for some $H\in (0,1)$. Then, 
\begin{eqnarray}
I\left( t,T,X_t,{k}\right) &=&I^{0}\left( t,T,X_t,{k}\right)
\nonumber \\
&&+\frac{\rho }{2}E_t \left[\int_{t}^{T}( BS^{-1}) ^{\prime }( k,\Gamma _{s}) H(s,T,X_{s},{k},v_{s})\Phi _{s}ds\right]
\label{impliedrel}
\end{eqnarray}%
where $I^{0}( t,T,X_t,k) $ denotes the implied
volatility in the uncorrelated case $\rho =0$,
\[
\Gamma _{s}:=E_{t}[ BS(t,T,X_t,k,v_{t})] +\frac{\rho }{2}%
E_{t}\left[\int_{t}^{s} H(r,T,X_{r},k,v_{r})\Phi _{r}dr\right], 
\]%
and $\Phi _{t}:=\sigma _{t}\int_{t}^{T}D_{t}^{W}\sigma _{r}^{2}dr.$ 
\end{proposition}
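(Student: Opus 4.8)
The plan is to establish the result in two stages: first prove the corresponding decomposition at the level of \emph{prices}, namely that $\Gamma_T = V_t$, and then lift this identity to the level of implied volatilities through the inverse Black-Scholes map. Observe that $\Gamma_s$ is $\mathcal{F}_t$-measurable and interpolates between the uncorrelated price and the true price: at $s=t$ we have $\Gamma_t = E_t[BS(t,T,X_t,k,v_t)]$, which by the Hull--White conditioning argument (condition on $\mathcal{F}^W_T$, so that $X_T$ is Gaussian with variance $\int_t^T\sigma_u^2\,du$) is exactly the $\rho=0$ price, whence $BS^{-1}(k,\Gamma_t) = I^0(t,T,X_t,k)$. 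The content of the price decomposition is that $\Gamma_T = V_t$, so that $BS^{-1}(k,\Gamma_T) = I(t,T,X_t,k)$.

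For the price decomposition I would reparametrise the Black-Scholes function by total remaining variance. Writing $Y_s := \int_s^T \sigma_u^2\,du$ and noting that with zero interest rate the arguments $d_1,d_2$ depend on $(T-s,v_s)$ only through $Y_s = v_s^2(T-s)$, there is a function $\widetilde{BS}$ with $BS(s,T,X_s,k,v_s) = \widetilde{BS}(X_s,Y_s)$ satisfying the heat equation $\partial_y \widetilde{BS} = \tfrac12(\partial_{xx}-\partial_x)\widetilde{BS} = \tfrac12 G$, hence $\partial_{xy}\widetilde{BS} = \tfrac12 H$. I would then apply the anticipating It\^o formula (in the Al\`os--Nualart form) to $\widetilde{BS}(X_s,Y_s)$ on $[t,T]$. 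Since $dX_s = -\tfrac12\sigma_s^2\,ds + \sigma_s(\rho\,dW_s + \sqrt{1-\rho^2}\,dB_s)$, $d\langle X\rangle_s = \sigma_s^2\,ds$ and $Y_s$ is of bounded variation in $s$ with $dY_s = -\sigma_s^2\,ds$, the ordinary $ds$-drifts combine into $-\tfrac12\sigma_s^2\partial_x\widetilde{BS} + \tfrac12\sigma_s^2\partial_{xx}\widetilde{BS} - \tfrac12\sigma_s^2(\partial_{xx}-\partial_x)\widetilde{BS} = 0$ by the heat equation. The only surviving drift is the Malliavin cross-term produced by the common driver $W$ of $X$ (through the $\rho$-component) and of the anticipating $Y_s$, namely $\rho\,\partial_{xy}\widetilde{BS}\,\sigma_s\,(D_s^W Y_s)\,ds = \tfrac{\rho}{2} H(s,T,X_s,k,v_s)\,\Phi_s\,ds$, where $D_s^W Y_s = \int_s^T D_s^W\sigma_u^2\,du$ and $\Phi_s = \sigma_s\int_s^T D_s^W\sigma_u^2\,du$; the $B$-component contributes nothing because $D_s^B Y_s = 0$. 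Taking $E_t$ annihilates the Skorohod-integral parts, and using $\widetilde{BS}(X_T,0) = (e^{X_T}-K)_+$ together with $\widetilde{BS}(X_t,Y_t)=BS(t,T,X_t,k,v_t)$ yields $V_t = E_t[BS(t,T,X_t,k,v_t)] + \tfrac{\rho}{2}E_t\big[\int_t^T H(s,T,X_s,k,v_s)\Phi_s\,ds\big] = \Gamma_T$.

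To pass from prices to implied volatilities, I would note that $s\mapsto\Gamma_s$ is $\mathcal{F}_t$-measurable and $C^1$ with $\Gamma_s' = \tfrac{\rho}{2}E_t[H(s,T,X_s,k,v_s)\Phi_s]$, and that $BS^{-1}(k,\cdot)$ is smooth and strictly increasing on the relevant range. The fundamental theorem of calculus and the chain rule then give $I - I^0 = BS^{-1}(k,\Gamma_T) - BS^{-1}(k,\Gamma_t) = \int_t^T (BS^{-1})'(k,\Gamma_s)\,\Gamma_s'\,ds$. Since $(BS^{-1})'(k,\Gamma_s)$ is $\mathcal{F}_t$-measurable it may be carried inside the conditional expectation defining $\Gamma_s'$, and reinserting the $ds$-integral inside $E_t$ gives precisely \eqref{impliedrel}.

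The algebraic cancellation above is routine; the real work, as in Al\`os--Shiraya, lies in justifying the anticipating It\^o formula together with the attendant integrability, and in handling the degeneracy at maturity. As $s\to T$ the total variance $Y_s\to 0$, so $\widetilde{BS}$ loses smoothness and the greek $H$ (a third-order $x$-derivative) blows up; I would therefore carry out the computation on $[t,T-\varepsilon]$ and let $\varepsilon\downarrow 0$. The lower bound $\sigma\ge a>0$ of (H1) guarantees $Y_s\ge a^2(T-s)$, which keeps us away from the singularity for $s<T$ and controls the growth of the greeks, while $\sigma\in\mathbb{L}^{3,2}_W$ and the bounds of (H2) on $E_r[D^W\sigma^2]$ and $E_r[D^WD^W\sigma^2]$ ensure the Skorohod integrands lie in the domain of the divergence operator and that the cross-term integral converges; dominated convergence then secures the limit $\varepsilon\downarrow0$. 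This integrability-and-limiting step is the main obstacle.
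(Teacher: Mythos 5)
Your proposal is correct and takes essentially the same route as the paper, which does not reprove this proposition but invokes the argument of Proposition 4.1 in Al\`os--Shiraya (2019): that argument is precisely your three-step scheme --- the anticipating-It\^o/heat-equation computation giving the price decomposition $\Gamma_T=V_t$ (the Al\`os extension of the Hull--White formula), the mixing/conditioning identification $BS^{-1}(k,\Gamma_t)=I^0(t,T,X_t,k)$, and the fundamental-theorem-of-calculus lift through $BS^{-1}(k,\cdot)$, with the singularity at maturity handled by truncation at $T-\varepsilon$ under (H1)--(H2). Nothing in your sketch deviates from that proof in substance, including the flagged integrability issues, which are exactly where the cited paper spends its technical effort.
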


By making use of Proposition \ref{Theoremcorrelatedcase} we can prove the following.

\begin{proposition}\label{newdecomposition}
Consider the model \eqref{themodel}\ and assume
that hypotheses (H1), (H2') and (H3)  hold for some $H\in(0,1) $. Then
\begin{eqnarray*}
\lefteqn{I\left( t,T,X_{t},k\right) -E_{t}\left( v_{t}\right)} \\
&=&I^{0}\left( t,T,X_{t},k\right) -E_{t}\left[ v_{t}\right]  \\
&&+\frac{\rho }{2}E_{t}\left[ H(t,X_{t},k,v_{t})U_{t}(k)\right. 
\\
&&+\frac{\rho ^{2}}{4}E_{t}\left[ \int_{t}^{T}\frac{\partial }{%
\partial x}\left( \frac{\partial ^{2}}{\partial x^{2}}-\frac{\partial }{%
\partial x}\right) H(s,X_{s},k,v_{s})\sigma _{s}\left(
\int_{s}^{T}D_{s}^{W}\sigma _{r}^{2}dr\right) U_{s}(k)ds\right.  \\
&&\left. +\int_{t}^{T}\frac{\partial }{\partial x}H(s,X_{s},k,v_{s})\left( \int_{s}^{T}\left( BS^{-1}\right) ^{\prime }\left( k,\Gamma _{r}\right) \left( D_{s}^{W}\Phi _{r}\right) dr\right) \sigma _{s}ds%
\right]
\end{eqnarray*}%
where $U_{t}(k)=\int_{t}^{T}\left( BS^{-1}\right) ^{\prime }\left( k,\Gamma _{s}\right) \Phi _{s}ds$. 
\end{proposition}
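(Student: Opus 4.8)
The plan is to take the first-order representation of Proposition \ref{Theoremcorrelatedcase} as the starting point and to carry out a second-order expansion of its single correction term
\[
\tfrac{\rho}{2}E_t\Big[\int_t^T (BS^{-1})'(k,\Gamma_s)\,H(s,T,X_s,k,v_s)\,\Phi_s\,ds\Big].
\]
The structural fact I would exploit at once is that $\Gamma_s$ is defined as an $\mathcal F_t$-conditional expectation and is therefore $\mathcal F_t$-measurable, so the factor $(BS^{-1})'(k,\Gamma_s)$ behaves as a frozen weight under $E_t$. Writing $J_s:=(BS^{-1})'(k,\Gamma_s)\Phi_s$ and splitting $H(s,T,X_s,k,v_s)=H(t,X_t,k,v_t)+\big(H(s,T,X_s,k,v_s)-H(t,X_t,k,v_t)\big)$, the frozen piece produces the leading term $\tfrac{\rho}{2}E_t[H(t,X_t,k,v_t)U_t(k)]$ because $\int_t^T J_s\,ds=U_t(k)$. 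Everything then reduces to analysing $\tfrac{\rho}{2}\int_t^T E_t[J_s(A_s-A_t)]\,ds$ with $A_r:=H(r,T,X_r,k,v_r)$.

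To expand $A_s-A_t$ I would apply an anticipating (Al\`os--Nualart) It\^o formula to the non-adapted process $r\mapsto A_r$. The decisive simplification is to reparametrize the Black--Scholes function through the remaining total variance $Y_r:=\int_r^T\sigma_u^2\,du$, since the heat relation $\partial_Y(\cdot)=\tfrac12(\partial_{xx}-\partial_x)(\cdot)$ then makes the genuine finite-variation drift cancel identically against the $-\tfrac12\sigma_r^2\,dr$ drift of $X$ and the $\tfrac12\sigma_r^2\,d\langle X\rangle_r$ term. What survives is only (i) the anticipating Malliavin drift correction $\tfrac{\rho}{2}\,\partial_x(\partial_{xx}-\partial_x)H(r,T,X_r,k,v_r)\,\Phi_r\,dr$, the factor $\tfrac12$ coming again from $\partial_Y$ together with $D_r^WY_r=\int_r^T D_r^W\sigma_u^2\,du$; and (ii) the Skorokhod stochastic integrals in $W$ and $B$ with integrand $\partial_x H(r,T,X_r,k,v_r)\sigma_r$ weighted by $\rho$ and $\sqrt{1-\rho^2}$.

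Substituting this back, the drift correction contributes $\tfrac{\rho}{2}\int_t^T E_t[J_s\int_t^s\alpha_r\,dr]\,ds$; after interchanging the two Lebesgue integrals and using $\int_r^T J_s\,ds=U_r(k)$ this is exactly the first $\tfrac{\rho^2}{4}$ term. The $W$-integral term does not vanish under $E_t$: since the weight $J_s$ is anticipating through $\Phi_s$, the $\mathcal F_t$-conditional duality $E_t[J_s\int_t^s\psi_r\,dW_r]=E_t[\int_t^s(D_r^WJ_s)\psi_r\,dr]$ applies, and because $(BS^{-1})'(k,\Gamma_s)$ is $\mathcal F_t$-measurable the derivative lands only on $\Phi_s$, giving $D_r^WJ_s=(BS^{-1})'(k,\Gamma_s)D_r^W\Phi_s$; a further interchange and relabelling then recognizes $\int_s^T(BS^{-1})'(k,\Gamma_r)D_s^W\Phi_r\,dr=D_s^WU_s(k)$ and yields the second stated term. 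The $B$-integral term drops out entirely, because $\sigma$ (hence $\Phi$) is $\mathcal F^{W}$-adapted and $\Gamma$ is $\mathcal F_t$-measurable, so $D_r^BJ_s=0$ and the analogous $B$-duality term is zero.

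I expect the genuine obstacle to be the rigorous application of the anticipating It\^o formula to the non-adapted integrand $A_r$ and, in particular, the exact bookkeeping of the Skorokhod/duality constants --- confirming that the $W$-stochastic integral interacts with the anticipating weight precisely through $D_s^WU_s(k)$ and contributes the stated coefficient rather than vanishing, and that the two pieces (anticipating drift versus duality of the martingale part) combine with the correct normalization. All interchanges of $E_t$, Lebesgue integration and Malliavin operators, together with the integrability of $D^W\Phi$, $D^WD^W\sigma^2$ and of the Black--Scholes Greeks $\partial_xH$ and $\partial_x(\partial_{xx}-\partial_x)H$ (which are singular as $T\downarrow t$ but at rates controlled by the total variance), would be justified from the uniform bounds in (H1) and the growth estimates in (H2); (H3) guarantees the surviving quantities remain finite in the limit. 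These estimates are of the same nature as those already needed for Proposition \ref{Theoremcorrelatedcase} and can be imported from its proof.
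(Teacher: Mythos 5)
Your strategy is in essence the paper's own: both proofs run a second-order anticipating-It\^o expansion of the correction term of Proposition \ref{Theoremcorrelatedcase}, both rely on the heat-equation structure of the Black--Scholes function to cancel all genuine drift terms, and both use the $\mathcal{F}_t$-measurability of $\Gamma$ so that Malliavin derivatives never hit $(BS^{-1})'(k,\Gamma_\cdot)$. The packaging differs: the paper applies the (two-auxiliary-process) anticipating It\^o formula once to the product $H(s,X_s,k,v_s)U_s(k)$ on $[t,T]$, uses $U_T(k)=0$ together with the vanishing of conditional expectations of Skorokhod integrals, and reads off the two cross terms; you instead freeze $H$ at time $t$, apply the formula to $H$ alone, and then dispose of the surviving Skorokhod integrals by the duality $E_t[F\delta^W(u)]=E_t\left[\int D^W_rF\,u_r\,dr\right]$ (and $D^B_r J_s=0$ for the $B$-integral). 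Since that duality relation is exactly the product rule for Skorokhod integrals, the two computations are term-by-term equivalent; your version needs one extra Fubini step but avoids the multi-process form of the It\^o formula, so I regard it as the same proof in different clothing.

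There is, however, one concrete gap: the constant on the second correction term, which you yourself flagged as ``the genuine obstacle'' and then did not actually carry out. In your decomposition the $W$-integral contribution enters with prefactor $\frac{\rho}{2}$ (from Proposition \ref{Theoremcorrelatedcase}) times $\rho$ (the $dW$-coefficient of $X$ in the martingale part of $H$), so after duality and Fubini you obtain $\frac{\rho^{2}}{2}E_t\left[\int_t^T \partial_x H(s,X_s,k,v_s)\left(\int_s^T(BS^{-1})'(k,\Gamma_r)\,D_s^W\Phi_r\,dr\right)\sigma_s\,ds\right]$, i.e.\ twice the coefficient $\frac{\rho^{2}}{4}$ displayed in the statement you set out to prove; your assertion that this ``yields the second stated term'' is therefore not substantiated as written. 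For what it is worth, your constant appears to be the consistent one: in the proof of Theorem \ref{mainresult} the paper's own decomposition (\ref{superdecomposition}) carries an explicit factor $2$ on precisely these $\partial_x H$ terms inside the $\frac{\rho^{2}}{4}$ bracket, which matches your $\frac{\rho^{2}}{2}$ and suggests the display in Proposition \ref{newdecomposition} contains a typo. But a complete proof must either derive the statement exactly as given or explicitly reconcile and correct the discrepancy; silently claiming agreement is the one step of your argument that fails.
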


We can now state and prove our main theorem.

\begin{theorem}\label{mainresult}
Consider the model \eqref{themodel} and assume that hypotheses (H1), (H2') and (H3)  hold for some $H\in(0,1) $. Then
\begin{eqnarray*}
\lim_{T\rightarrow t}\frac{I( t,T,X_t,k^+_{t,T})-I( t,T,X_t,k^-_{t,T})  }{(T-t)^{H+\frac12 }} 
&=&\frac{\rho }{2}\lim_{T\rightarrow t}\frac{1}{(T-t)^{H+\frac32}}%
\int_{t}^{T}\left( \int_{s}^{T}D_{s}^{W}\sigma
_{r}^{2}dr\right) ds.
\end{eqnarray*}%
\end{theorem}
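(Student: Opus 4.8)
\emph{Strategy.} The plan is to feed the two strikes $k=k^{+}_{t,T}$ and $k=k^{-}_{t,T}$ into the decomposition of Proposition \ref{newdecomposition} and subtract. Since the volatility--swap strike $E_{t}[v_{t}]$ does not depend on the strike it cancels, and $I(t,T,X_t,k^{+}_{t,T})-I(t,T,X_t,k^{-}_{t,T})$ splits into (i) the uncorrelated difference $I^{0}(k^{+})-I^{0}(k^{-})$, (ii) the first--order term $\frac{\rho}{2}\big(E_{t}[H(t,X_t,k^{+},v_{t})U_{t}(k^{+})]-E_{t}[H(t,X_t,k^{-},v_{t})U_{t}(k^{-})]\big)$, and (iii) the difference of the two $\rho^{2}$ contributions. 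I would then show that after dividing by $(T-t)^{H+1/2}$ only the $k^{+}$ part of (ii) survives, every other piece being $o((T-t)^{H+1/2})$.

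\emph{The crucial Black--Scholes computation.} Everything hinges on the explicit shape of $H$. Differentiating $BS$ and using $N''=-d_{1}N'$ gives $H(t,T,x,k,\sigma)=\frac{e^{x}N'(d_{1})}{\sigma\sqrt{T-t}}\big(1-\frac{d_{1}}{\sigma\sqrt{T-t}}\big)$. At the dual zero--vanna strike one has $d_{1}=0$ for the implied volatility, so $d_{1}(k^{+},v_{t})\to0$ and the bracket tends to $1$; hence $H(t,X_t,k^{+},v_{t})$ is of exact order $(T-t)^{-1/2}$. At the zero--vanna strike $d_{2}=0$, i.e. $d_{1}=I(k^{-})\sqrt{T-t}$, which forces $1-\frac{d_{1}(k^{-},v_{t})}{v_{t}\sqrt{T-t}}=\frac{v_{t}^{2}-I^{2}(k^{-})}{2v_{t}^{2}}\to0$. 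This vanishing prefactor is the mechanism that removes the $k^{-}$ contribution from the limit.

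\emph{Order bookkeeping.} Using $a\le\sigma\le b$ and the bound $|E_{r}[D_{r}^{W}\sigma_{s}^{2}]|\le\nu(s-r)^{H-1/2}$ one estimates $\Phi_{s}=\sigma_{s}\int_{s}^{T}D_{s}^{W}\sigma_{r}^{2}dr=O((T-s)^{H+1/2})$ and, since $(BS^{-1})'=1/\mathrm{Vega}=O((T-t)^{-1/2})$ at both strikes, $U_{t}(k^{\pm})=O((T-t)^{H+1})$. Consequently the $k^{+}$ term is of order $(T-t)^{-1/2}\cdot(T-t)^{H+1}=(T-t)^{H+1/2}$, exactly the scale of interest, whereas the $k^{-}$ term carries the extra factor $v_{t}^{2}-I^{2}(k^{-})=O((T-t)^{H+1/2})$ and is therefore $O((T-t)^{2H+1})=o((T-t)^{H+1/2})$. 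The uncorrelated difference $I^{0}(k^{+})-I^{0}(k^{-})$ is negligible because the $\rho=0$ smile is symmetric about the ATM strike while $k^{\pm}$ are symmetric about $X_t$ to leading order, so the difference is governed by the (vanishing) uncorrelated ATM slope together with the smile curvature acting on an $O((T-t)^{H+3/2})$ strike asymmetry, hence is of higher order; and each $\rho^{2}$ term carries one additional Malliavin integral, contributing a further factor $(T-t)^{H+1/2}$ and so also being higher order (here the second-- and third--order bounds of (H2) are used).

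\emph{Conclusion and the hard part.} On the surviving $k^{+}$ term the factors $e^{X_t}N'(0)$ coming from $H(k^{+})$ and from $(BS^{-1})'(k^{+},\Gamma_{s})=1/\mathrm{Vega}$ cancel, leaving $\frac{\rho}{2}E_{t}\big[\frac{1}{v_{t}(T-t)}\int_{t}^{T}\sigma_{s}\int_{s}^{T}D_{s}^{W}\sigma_{r}^{2}dr\,ds\big]$; replacing the ratio $\sigma_{s}/v_{t}$ by its limit $1$ and dividing by $(T-t)^{H+1/2}$ yields $\frac{\rho}{2}\,\frac{1}{(T-t)^{H+3/2}}E_{t}[\int_{t}^{T}\int_{s}^{T}D_{s}^{W}\sigma_{r}^{2}dr\,ds]$, whose limit exists by (H3). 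Since that (H3)--limit is $\mathcal{F}_{t}$--measurable the conditional expectation may be dropped, producing the asserted formula. I expect the main obstacle to be making the replacements of $v_{t}$, of the $\Gamma_{s}$--inverting volatility, and of $I(k^{\pm})$ all by $\sigma_{t}$ rigorous and uniform, and in particular proving that the induced remainders—most delicately the rate at which the $k^{-}$ prefactor $v_{t}^{2}-I^{2}(k^{-})$ decays—are genuinely $o((T-t)^{H+1/2})$.
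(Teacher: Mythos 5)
Your overall strategy coincides with the paper's: feed $k^{\pm}_{t,T}$ into Proposition \ref{newdecomposition}, subtract, and isolate the $k^{+}$ first-order term via the explicit formula $H(t,X_t,k,v_t)=\frac{e^{X_t}N'(d_1(k,v_t))}{v_t\sqrt{T-t}}\bigl(1-\frac{d_1(k,v_t)}{v_t\sqrt{T-t}}\bigr)$, with the bracket equal to $\frac{v_t^2+I^2(k^{+})}{2v_t^2}\to 1$ at the dual strike and to the vanishing factor $\frac{v_t^2-I^2(k^{-})}{2v_t^2}$ at the zero-vanna strike. That part, and the final identification of the limit, is exactly the paper's Steps 1--2 and 5 (the paper disposes of $I^{0}(k^{+})-I^{0}(k^{-})$ by the exact Renault--Touzi symmetry rather than your slope-times-asymmetry estimate, but both routes are serviceable).

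The genuine gap is in your treatment of the $\rho^{2}$ terms. You claim each such term is individually negligible because the extra Malliavin integral "contributes a further factor $(T-t)^{H+1/2}$". This ignores that those terms involve far more singular Greeks than the first-order term: while $H=O((T-t)^{-1/2})$, a direct computation (done in the paper's Steps 3--4) gives $\frac{\partial}{\partial x}\bigl(\frac{\partial^{2}}{\partial x^{2}}-\frac{\partial}{\partial x}\bigr)H=O((T-t)^{-5/2})$ and $\frac{\partial H}{\partial x}=O((T-t)^{-3/2})$, with leading coefficients ($48v_t^4$, resp. $-4v_t^2$) that do not vanish at either strike. Combining these with $\Phi_s=O((T-s)^{H+1/2})$, $U_s(k)=O((T-t)^{-1/2}(T-s)^{H+3/2})$, $\int_s^T(BS^{-1})'(k,\Gamma_r)D_s^W\Phi_r\,dr=O((T-t)^{-1/2}(T-s)^{2H+1})$ and the $ds$ integration, each individual $\rho^{2}$ term is $O((T-t)^{2H})$, not $O((T-t)^{2H+1})$. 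Since $(T-t)^{2H}=o((T-t)^{H+1/2})$ only when $H>1/2$, your bound fails precisely in the rough regime $H\le 1/2$ that the theorem (and the paper's motivation) covers: there the individual $\rho^{2}$ terms are of the same order as, or larger than, the quantity you are computing. The proof cannot proceed term by term; it needs the cancellation the paper exploits, namely that the leading singular parts of these Greeks are identical at $k^{+}$ and $k^{-}$ (the strike dependence only enters the subleading, $(T-t)$-damped coefficients) and appear in the decomposition with opposite signs, so that the differences of the $\rho^{2}$ terms are $O((T-t)^{2H+1})$. Your proposal contains no substitute for this step. A smaller inaccuracy: $v_t^2-I^2(k^{-})$ is not $O((T-t)^{H+1/2})$ pathwise --- the fluctuation $v_t^2-E_t[v_t^2]$ is only $O((T-t)^{H})$ in conditional $L^2$ --- so the claimed $O((T-t)^{2H+1})$ rate for the $k^{-}$ term needs Cauchy--Schwarz against $U_t(k^{-})$; the limit-zero conclusion itself, which is all the theorem requires there, does survive.
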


From the main theorem the following set of corollaries easily follow. The corollaries contain the results that we will be looking at in the section on numerics.

\begin{corollary}
In the short time-to-maturity limit
\begin{equation}
\lim_{T\rightarrow t}\frac{I( t,T,X_t,k^+_{t,T})-I( t,T,X_t,k^-_{t,T})  }{(T-t)^{H+\frac12 }} 
=\sigma_t^2 \lim_{T\to t}\frac{1}{(T-t)^{H-\frac12}}\frac{\partial I}{\partial k}(k^*_{t,T}).
\end{equation}
Thus, for short time-to-maturity the following approximation holds:
\begin{equation}
I( t,T,X_t,k^+_{t,T})-I( t,T,X_t,k^-_{t,T}) \approx \sigma_t^2 (T-t)\frac{\partial I}{\partial k}(k^*_{t,T}).
\end{equation}
\begin{proof}
Notice that, from the results in Al\`os et al. \cite{ALV} we know that 
$$
\lim_{T\to t}(T-t)^{\frac12-H}\frac{\partial I}{\partial k}(k^*_{t,T})=\frac{\rho }{2\sigma_t^2}\lim_{T\rightarrow t}\frac{1}{(T-t)^{H+\frac32}}%
\int_{t}^{T}\left( \int_{s}^{T}D_{s}^{W}\sigma
_{r}^{2}dr\right) ds.
$$
Jointly with Theorem \ref{mainresult} gives us the result.
\end{proof}
\end{corollary}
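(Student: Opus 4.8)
The plan is to derive the corollary by combining the main theorem, Theorem~\ref{mainresult}, with the short-maturity at-the-money skew asymptotics of Al\`os et al.~\cite{ALV}, exploiting the fact that both results express their respective limiting quantities through one and the same underlying object. Concretely, I would introduce the abbreviation
\[
L := \lim_{T\rightarrow t}\frac{1}{(T-t)^{H+\frac32}}\int_{t}^{T}\left(\int_{s}^{T}D_{s}^{W}\sigma_{r}^{2}dr\right)ds,
\]
whose existence and finiteness is exactly what hypothesis (H3) guarantees, and then show that the two sides of the claimed identity both equal $\frac{\rho}{2}L$.

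First I would record the skew asymptotics from \cite{ALV}, which under (H1), (H2') and (H3) read
\[
\lim_{T\to t}(T-t)^{\frac12-H}\frac{\partial I}{\partial k}(k^*_{t,T}) = \frac{\rho}{2\sigma_t^2}\,L.
\]
Because (H1) ensures $\sigma_t\geq a>0$, the factor $\sigma_t^2$ is strictly positive and I may multiply through to isolate the common object, obtaining
\[
\frac{\rho}{2}\,L = \sigma_t^2\lim_{T\to t}(T-t)^{\frac12-H}\frac{\partial I}{\partial k}(k^*_{t,T}).
\]

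Next, Theorem~\ref{mainresult} states precisely that
\[
\lim_{T\rightarrow t}\frac{I(t,T,X_t,k^+_{t,T})-I(t,T,X_t,k^-_{t,T})}{(T-t)^{H+\frac12}} = \frac{\rho}{2}\,L.
\]
Substituting the rearranged skew identity for $\frac{\rho}{2}L$ on the right-hand side, and rewriting $(T-t)^{\frac12-H}=1/(T-t)^{H-\frac12}$, yields the first displayed equality of the corollary. The only point that genuinely requires care is that all three limits—namely $L$, the skew limit, and the implied-volatility-difference limit—exist individually as finite numbers, so that the substitution is legitimate rather than a formal manipulation of divergent expressions; this is supplied by (H3), while (H1) keeps the denominator $\sigma_t^2$ away from zero.

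Finally, to pass from the limit identity to the stated approximation I would invoke the definition of asymptotic equivalence: since the ratio $\bigl(I(t,T,X_t,k^+_{t,T})-I(t,T,X_t,k^-_{t,T})\bigr)/(T-t)^{H+\frac12}$ and the quantity $\sigma_t^2(T-t)^{\frac12-H}\frac{\partial I}{\partial k}(k^*_{t,T})$ share the same finite limit as $T\to t$, they differ by a vanishing amount for small $T-t$. Multiplying through by $(T-t)^{H+\frac12}$ and collapsing the exponent via $(T-t)^{H+\frac12}\cdot(T-t)^{\frac12-H}=(T-t)$ then gives
\[
I(t,T,X_t,k^+_{t,T})-I(t,T,X_t,k^-_{t,T}) \approx \sigma_t^2(T-t)\frac{\partial I}{\partial k}(k^*_{t,T}).
\]
I do not expect any substantive obstacle at this stage: the entire analytic difficulty has already been absorbed into the proof of Theorem~\ref{mainresult} and into the cited expansion of \cite{ALV}, so that the remaining task is the purely algebraic bookkeeping of matching the common limit $L$ and tracking the powers of $(T-t)$.
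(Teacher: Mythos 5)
Your proposal is correct and follows exactly the paper's own argument: both combine Theorem~\ref{mainresult} with the at-the-money skew asymptotics of Al\`os et al.~\cite{ALV}, identifying the common limit $\frac{\rho}{2}\lim_{T\to t}(T-t)^{-H-\frac32}\int_t^T\bigl(\int_s^T D_s^W\sigma_r^2\,dr\bigr)ds$ and rearranging. Your additional remarks on the finiteness of the limits under (H3) and on passing from the limit identity to the approximation are sound elaborations of what the paper leaves implicit.
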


\begin{corollary}
In the small time-to-maturity limit
\begin{equation}
\lim_{T\rightarrow t}\frac{I( t,T,X_t,k^+_{t,T})-I( t,T,X_t,k^-_{t,T})  }{(T-t)^{H+\frac12 }} 
= \frac{1}{2\sigma_t}\lim_{T\to t}\frac{1}{(T-t)^{H+\frac12}}E_t\left[\left(\frac{S_T-S_t}{S_t}\right)\frac{1}{(T-t)}\int_t^T \sigma_r^2 dr\right].
\end{equation}
Thus, for short time-to-maturity the following approximation holds:
\begin{equation}
I( t,T,X_t,k^+_{t,T})-I( t,T,X_t,k^-_{t,T})  \approx \frac{1}{2\sigma_t}E_t\left[\left(\frac{S_T-S_t}{S_t}\right)\frac{1}{(T-t)}\int_t^T \sigma_r^2 dr\right].
\end{equation}
\begin{proof}
 Assume the model \eqref{themodel}. Then
 $$
 S_T=S_t+\int_t^T \sigma_rS_r\left( \rho dW_{r}+\sqrt{1-\rho ^{2}}dB_{r}\right).
 $$
 On the other hand, a direct application of the Clark-Ocone-Haussman theorem gives us that (see for example Al\`os and Garc\'ia-Lorite (2024))
 $$
 \int_t^T \sigma_r^2 dr=E_t\left[\int_t^T \sigma_r^2 dr\right]+\int_t^T \int_r^T E_r[D_r^W\sigma_u^2]du dW_r.
 $$
 Then, 
\begin{eqnarray}
        \lefteqn{E_t\left[\left(\frac{S_T-S_t}{S_t}\right)\frac{1}{T-t}\int_t^T \sigma_r^2 dr\right]}\nonumber\\
        &=&\frac{\rho}{S_t} E_t\left[\left(\int_t^T \sigma_rS_rdW_r\right)\frac{1}{T-t}\left(\int_t^T\int_r^T E_r(D_r^W \sigma_u^2)dudW_r\right)\right]\nonumber\\
     &=&\frac{\rho}{S_t} E_t\left[\int_t^T \sigma_rS_r\frac{1}{T-t}\left(\int_r^T E_r[D_r^W \sigma_u^2]du\right)dr\right].
    \end{eqnarray}
Together with Theorem \ref{mainresult} this proves the corollary.
\end{proof}
\end{corollary}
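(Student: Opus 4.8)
The plan is to compute the conditional expectation on the right-hand side in closed form and to show that, after normalisation by $(T-t)^{H+\frac12}$ and passage to the limit $T\to t$, it coincides exactly with the right-hand side of Theorem~\ref{mainresult}; the corollary then follows immediately by comparing the two limits. So the whole task reduces to an explicit stochastic-calculus evaluation of the covariance term, followed by a factor-matching argument in the limit.

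First I would rewrite the normalised return using the martingale dynamics of model~\eqref{themodel}. Since the interest rate is zero, $S$ is a (true, by (H1)) martingale, so
\[
\frac{S_T-S_t}{S_t}=\frac{1}{S_t}\int_t^T \sigma_r S_r\left(\rho\,dW_r+\sqrt{1-\rho^2}\,dB_r\right).
\]
In parallel I would represent the realised variance by its Clark--Ocone--Haussman decomposition,
\[
\int_t^T \sigma_r^2\,dr=E_t\!\left[\int_t^T \sigma_r^2\,dr\right]+\int_t^T\left(\int_r^T E_r[D_r^W\sigma_u^2]\,du\right)dW_r,
\]
observing that the integrand $\int_r^T E_r[D_r^W\sigma_u^2]\,du$ is $\mathcal{F}_r$-adapted. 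Multiplying the two representations and taking the $\mathcal{F}_t$-conditional expectation, the $\mathcal{F}_t$-measurable term $E_t[\int_t^T\sigma_r^2\,dr]$ multiplies the return and therefore drops out, because $E_t[(S_T-S_t)/S_t]=0$ by the martingale property. For the product of the two stochastic integrals I would use the orthogonality of $W$ and $B$, so that only the $\rho\,dW_r$ component of the return contributes, and then Itô's isometry to collapse the product into a single Lebesgue integral, yielding
\[
E_t\!\left[\left(\frac{S_T-S_t}{S_t}\right)\frac{1}{T-t}\int_t^T\sigma_r^2\,dr\right]=\frac{\rho}{S_t}\,E_t\!\left[\int_t^T \sigma_r S_r\,\frac{1}{T-t}\left(\int_r^T E_r[D_r^W\sigma_u^2]\,du\right)dr\right].
\]

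Finally I would divide by $(T-t)^{H+\frac12}$ and let $T\to t$. Inserting the identity above, the prefactor $\tfrac{1}{2\sigma_t}\cdot\tfrac{\rho}{S_t}$ multiplies $\sigma_r S_r$ inside the integral, and in the limit $\sigma_r S_r/(\sigma_t S_t)\to 1$ so the factor $\sigma_t S_t$ should come out, cancelling against $\tfrac{1}{2\sigma_t}\tfrac{\rho}{S_t}$ to leave exactly $\tfrac{\rho}{2}$. Combined with the tower property $E_t[E_r[D_r^W\sigma_u^2]]=E_t[D_r^W\sigma_u^2]$ this reduces the right-hand side of the corollary to $\tfrac{\rho}{2}\lim_{T\to t}(T-t)^{-(H+\frac32)}E_t[\int_t^T\int_r^T D_r^W\sigma_u^2\,du\,dr]$, which is precisely the right-hand side of Theorem~\ref{mainresult} after relabelling the integration variables.

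I expect the only delicate point to be this final interchange of limit and integral, namely the replacement of $\sigma_r S_r$ by its limiting value $\sigma_t S_t$: one must check that the resulting error is $o((T-t)^{H+\frac32})$ after integration in $r$. This should follow from the uniform bounds in (H1), which give $L^p$ control of $S_r/S_t$ and $L^2$-continuity of $\sigma$ as $r\downarrow t$, together with the growth estimate $|E_r[D_r^W\sigma_u^2]|\le \nu(u-r)^{H-\frac12}$ of (H2), which bounds the inner integral $\int_r^T E_r[D_r^W\sigma_u^2]\,du$ by a multiple of $(T-r)^{H+\frac12}$ and hence guarantees the whole expression is of the correct order in $T-t$.
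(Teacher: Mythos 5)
Your proposal is correct and follows essentially the same route as the paper's own proof: the same martingale representation of $(S_T-S_t)/S_t$, the same Clark--Ocone--Haussman decomposition of $\int_t^T\sigma_r^2\,dr$, the same It\^o-isometry collapse to $\frac{\rho}{S_t}E_t\left[\int_t^T \sigma_r S_r\,\frac{1}{T-t}\left(\int_r^T E_r[D_r^W\sigma_u^2]\,du\right)dr\right]$, and then a comparison with Theorem~\ref{mainresult}. The only difference is that you make explicit the final limit-matching step (replacing $\sigma_r S_r$ by $\sigma_t S_t$ and controlling the error via (H1)--(H2)), which the paper compresses into ``Together with Theorem~\ref{mainresult} this proves the corollary.''
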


\begin{corollary}\label{covar}
In the small time-to-maturity limit
\begin{equation}\label{covarlim}
\lim_{T\rightarrow t}\frac{I( t,T,X_t,k^+_{t,T})-I( t,T,X_t,k^-_{t,T})  }{(T-t)^{H+\frac12 }} 
=\lim_{T\to t}\frac{1}{(T-t)^{H+\frac12}}E_t\left[\frac{S_T-S_t}{S_t}\sqrt{\frac{1}{T-t}\int_t^T \sigma_r^2 dr}\right].
\end{equation}
That is, for short times to maturity
\begin{equation}\label{covarapprox}
\frac {I( t,T,X_t,k^+_{t,T})-I( t,T,X_t,k^-_{t,T})}{ (T-t)^{H+\frac12}} \approx \frac{1}{(T-t)^{H+\frac12} } E_t\left[\frac{S_T-S_t}{S_t}\sqrt{\frac{1}{T-t}\int_t^T \sigma_r^2 dr}\right].
\end{equation}
\begin{proof}
    A direct application of the Clark-Ocone-Haussman theorem gives us that 
    $$
    \sqrt{\int_t^T \sigma_r^2 dr}=E_t\left[\sqrt{\int_t^T \sigma_r^2 dr}\right] + 
	\int_t^T \int_r^T E_r\left[\frac{D_r^W \sigma_u^2}{2\sqrt{\int_t^T \sigma_r^2 dr}}\right]dudW_r.
   $$
    A similar argument as in the previous corollary allow us to deduce the result.
\end{proof}

\end{corollary}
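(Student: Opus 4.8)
The plan is to mirror the proof of the preceding corollary, the only genuinely new ingredient being the square-root denominator that the Clark--Ocone--Haussman representation of $\sqrt{\int_t^T\sigma_r^2\,dr}$ introduces. Using \eqref{themodel} I would first write
\[
\frac{S_T-S_t}{S_t}=\frac{1}{S_t}\int_t^T\sigma_rS_r\bigl(\rho\,dW_r+\sqrt{1-\rho^2}\,dB_r\bigr),
\]
and insert the quoted representation of $\sqrt{\int_t^T\sigma_r^2\,dr}$ into $E_t\bigl[\frac{S_T-S_t}{S_t}\sqrt{\int_t^T\sigma_r^2\,dr}\bigr]$. The $\mathcal F_t$-measurable term $E_t[\sqrt{\int_t^T\sigma_r^2\,dr}]$ contributes nothing, because (H1) makes $S$ a genuine $P$-martingale and hence $E_t[(S_T-S_t)/S_t]=0$; moreover, since $W$ and $B$ are independent, the $dB$-part of $(S_T-S_t)/S_t$ is orthogonal to the $dW$-integral produced by Clark--Ocone, so only the $\rho\,dW$-part survives.

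Applying the It\^o isometry to the two remaining $dW$-integrals and then the tower property to absorb the $\mathcal F_r$-measurable factor $\sigma_rS_r$ into the inner conditional expectation, I would obtain
\[
E_t\!\left[\frac{S_T-S_t}{S_t}\sqrt{\int_t^T\sigma_r^2\,dr}\right]=\frac{\rho}{S_t}\,E_t\!\left[\int_t^T\frac{\sigma_rS_r}{2\sqrt{\int_t^T\sigma_s^2\,ds}}\left(\int_r^TD_r^W\sigma_u^2\,du\right)dr\right].
\]
Since $\sqrt{\tfrac{1}{T-t}\int_t^T\sigma_r^2\,dr}=(T-t)^{-1/2}\sqrt{\int_t^T\sigma_r^2\,dr}$, the left-hand side of \eqref{covarlim} equals $(T-t)^{-(H+1)}$ times the quantity on the right above.

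The decisive step is the short-maturity behaviour of the nonlinear denominator. By (H1) one has $a\sqrt{T-t}\le\sqrt{\int_t^T\sigma_s^2\,ds}\le b\sqrt{T-t}$, so the weight $\frac{\sigma_rS_r}{2\sqrt{\int_t^T\sigma_s^2\,ds}}$ is of exact order $(T-t)^{-1/2}$; using the right-continuity of $\sigma$ at $t$, the martingale estimate $E_t[(S_r-S_t)^2]\to0$, and $\int_t^T\sigma_s^2\,ds=\sigma_t^2(T-t)(1+o(1))$, this weight may be replaced by $\frac{S_t}{2\sqrt{T-t}}$ up to a remainder. I would control the remainder exactly as in the previous corollary, bounding $\bigl|E_r\int_r^TD_r^W\sigma_u^2\,du\bigr|\le\nu\int_r^T(u-r)^{H-\frac12}\,du=O\bigl((T-r)^{H+\frac12}\bigr)$ via (H2) and checking that, after division by $(T-t)^{H+\frac32}$, its contribution vanishes as $T\to t$. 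The leading term then gives
\[
\lim_{T\to t}\frac{1}{(T-t)^{H+\frac12}}E_t\!\left[\frac{S_T-S_t}{S_t}\sqrt{\tfrac{1}{T-t}\int_t^T\sigma_r^2\,dr}\right]=\frac{\rho}{2}\lim_{T\to t}\frac{1}{(T-t)^{H+\frac32}}E_t\!\left[\int_t^T\!\left(\int_s^TD_s^W\sigma_r^2\,dr\right)ds\right],
\]
which is precisely the right-hand side of Theorem \ref{mainresult}. As the left-hand side of \eqref{covarlim} is by definition the left-hand side of that theorem, the two limits coincide and \eqref{covarlim} follows.

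The main obstacle is handling the square root: unlike the previous corollary, where the average variance enters linearly, here one must justify replacing $\sqrt{\int_t^T\sigma_s^2\,ds}$ by its leading order $\sigma_t\sqrt{T-t}$ and show that the resulting error is negligible at the scale $(T-t)^{H+\frac32}$. This is exactly where (H1), keeping the denominator bounded away from zero, and (H2), controlling the size of $\int_r^TD_r^W\sigma_u^2\,du$, do the work.
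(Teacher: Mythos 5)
Your proposal takes essentially the same route as the paper: the paper's proof consists precisely of the Clark--Ocone--Haussman representation of $\sqrt{\int_t^T\sigma_r^2\,dr}$ that you quote, followed by the remark that the argument of the preceding corollary (orthogonality of the $dB$-part, It\^o isometry, tower property, and identification of the limit with the right-hand side of Theorem \ref{mainresult}) carries over verbatim. Your write-up simply makes explicit the steps the paper leaves implicit, in particular the replacement of the square-root weight by its leading order $\sigma_t\sqrt{T-t}$ and the use of (H1)--(H2) to control the resulting remainder.
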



\section{Numerical examples}

In this section, we examine the accuracy of the approximation derived in Corollary \ref{covar} (which is equivalent to approximation \eqref{rolloos}) and approximation \eqref{Happrox} for $\rho \neq 0$ using the Monte Carlo method under the rough Bergomi model:
\begin{eqnarray}
S_t &=& \exp\left( X_0- \frac{1}{2} \int_{0}^{t} \sigma _{s}^{2} ds+\int_{0}^{t}\sigma
_{s}dB_{s}\right),\\
\sigma_t^2 &=&\sigma_0^2\exp\left(\alpha W_t^H-\frac12 \alpha^2t^{2H}\right)
\end{eqnarray}
where $W_t^H:=\sqrt{2H}\int_0^t (t-s)^{H-\frac12}dW_s$, and for $0\le s < t$ and $\rho \in [-1,1]$,
\begin{eqnarray}
\label{rBcov}
E[W_t^H W_s^H]
&=& 
s^{2H} \int^1_0 \frac{2H}{(1-x)^{\frac12 - H}(t/s-x)^{\frac12 - H}}dx \\
\label{rBW}
E[W_t^H B_s]&=&\frac{\rho\sqrt{2H}}{H+\frac12}\left(t^{H+\frac12}-(t-\min(t,s))^{H+\frac12}\right).
\end{eqnarray}

The model parameters are set as $S_0=100$, $\sigma_0=0.2$, $\alpha=0.8$, the correlations are $\rho=-0.2$, $-0.4$, $-0.6$, $-0.8$, the Hurst indices are $H=0.1$, $0.3$, $0.5$, $0.7$, $0.9$, and the maturities are set to $T=0.0025$, $0.005$, $0.01$, $0.025$, $0.05$, $0.1$, $0.25$, $0.5$, $1$, $2$ and $3$.
The number of time steps in the Monte Carlo method is set to $\max\{500T, 100\}$, and the number of simulations is set to 20 million. 

Tables \ref{t1} to \ref{t4} contains the approximation of Corollary \ref{covar}, or equivalently \eqref{rolloos}, for various values of $H$, $\rho$ and $T-t$. Times to maturity less than $0.05$ have not been included to save space. However,  as can also be seen in the tables the accuracy increases as $T-t$ decreases. In the tables ``Cov'' denotes the covariance between $S_T/S_t-1$ and $v_t$

\begin{table}[H]
\centering
\newcolumntype{Y}{>{\centering\arraybackslash}X}
\newcolumntype{Z}{>{\raggedleft\arraybackslash}X}
\begin{tabularx}{\linewidth}{XXYYYYYY} \hline
H index & \multicolumn{1}{c}{Maturity} & $0.05$ & $0.1$ & $0.25$ & $0.5$ & $1$   & $2$ \\
\hline
$0.1$ & $I(k_{t,T}^-)$ & 0.1977 & 0.1974 & 0.1969 & 0.1963 & 0.1958 & 0.1952 \\
      & $I(k_{t,T}^+)$ & 0.1975 & 0.1970 & 0.1963 & 0.1954 & 0.1945 & 0.1931 \\
      & Cov & -0.0002 & -0.0004 & -0.0006 & -0.0009 & -0.0014 & -0.0022 \\
      & $\frac{I(k_{t,T}^+) - I(k_{t,T}^-)}{(T-t)^{H+1/2}}$ & -0.0014 & -0.0014 & -0.0014 & -0.0014 & -0.0014 & -0.0013 \\
      & $\frac{\mathrm{Cov}}{(T-t)^{H+1/2}}$ & -0.0015 & -0.0014 & -0.0015	& -0.0014	& -0.0014	& -0.0014 \\
\hline
$0.3$ & $I(k_{t,T}^-)$ & 0.1990 & 0.1985 & 0.1975 & 0.1961 & 0.1941 & 0.1911 \\
      & $I(k_{t,T}^+)$ & 0.1989 & 0.1983 & 0.1969 & 0.1952 & 0.1926 & 0.1885 \\
      & Cov & -0.0002 & -0.0003 & -0.0006 & -0.0010 & -0.0017 & -0.0029 \\
      & $\frac{I(k_{t,T}^+) - I(k_{t,T}^-)}{(T-t)^{H+1/2}}$ & -0.0017	& -0.0017	& -0.0016	& -0.0016	& -0.0015	& -0.0015 \\
      & $\frac{\mathrm{Cov}}{(T-t)^{H+1/2}}$ & -0.0017	& -0.0017	& -0.0017	& -0.0017	& -0.0017	& -0.0016 \\
\hline
$0.5$ & $I(k_{t,T}^-)$ & 0.1997 & 0.1995 & 0.1987 & 0.1973 & 0.1946 & 0.1893 \\
      & $I(k_{t,T}^+)$ & 0.1997 & 0.1993 & 0.1983 & 0.1965 & 0.1932 & 0.1867 \\
      & Cov & -0.0001 & -0.0002 & -0.0004 & -0.0008 & -0.0016 & -0.0030 \\
      & $\frac{I(k_{t,T}^+) - I(k_{t,T}^-)}{(T-t)^{H+1/2}}$ & -0.0016	& -0.0016	& -0.0015	& -0.0015	& -0.0014	& -0.0013 \\
      & $\frac{\mathrm{Cov}}{(T-t)^{H+1/2}}$ & -0.0016	& -0.0016	& -0.0016	& -0.0016	& -0.0016	& -0.0015 \\
\hline
$0.7$ & $I(k_{t,T}^-)$ & 0.1999 & 0.1998 & 0.1993 & 0.1982 & 0.1954 & 0.1881 \\
      & $I(k_{t,T}^+)$ & 0.1999 & 0.1997 & 0.1991 & 0.1976 & 0.1941 & 0.1855 \\
      & Cov & 0.0000 & -0.0001 & -0.0003 & -0.0006 & -0.0014 & -0.0031 \\
      & $\frac{I(k_{t,T}^+) - I(k_{t,T}^-)}{(T-t)^{H+1/2}}$ & -0.0014	& -0.0014	& -0.0014	& -0.0014	& -0.0013	& -0.0011 \\
      & $\frac{\mathrm{Cov}}{(T-t)^{H+1/2}}$ & -0.0014	& -0.0014	& -0.0014	& -0.0014	& -0.0014	& -0.0013 \\
\hline
$0.9$ & $I(k_{t,T}^-)$ & 0.2000 & 0.1999 & 0.1997 & 0.1989 & 0.1961 & 0.1869 \\
      & $I(k_{t,T}^+)$ & 0.2000 & 0.1999 & 0.1995 & 0.1984 & 0.1949 & 0.1845 \\
      & Cov & 0.0000 & -0.0001 & -0.0002 & -0.0005 & -0.0012 & -0.0031 \\
      & $\frac{I(k_{t,T}^+) - I(k_{t,T}^-)}{(T-t)^{H+1/2}}$ & -0.0012	& -0.0012	& -0.0013	& -0.0012	& -0.0012	& -0.0009 \\
      & $\frac{\mathrm{Cov}}{(T-t)^{H+1/2}}$ & -0.0013	& -0.0013	& -0.0013	& -0.0013	& -0.0012	& -0.0012 \\
\hline
\end{tabularx}%
\caption{Approximate error of covariance in $\rho=-0.2$}
\label{t1}
\end{table}

\begin{table}[H]
\centering
\newcolumntype{Y}{>{\centering\arraybackslash}X}
\newcolumntype{Z}{>{\raggedleft\arraybackslash}X}
\begin{tabularx}{\linewidth}{XXYYYYYY} \hline
H index & \multicolumn{1}{c}{Maturity} & $0.05$ & $0.1$ & $0.25$ & $0.5$ & $1$   & $2$ \\
\hline
$0.1$ & $I(k_{t,T}^-)$ & 0.1975 & 0.1971 & 0.1966 & 0.1960 & 0.1954 & 0.1947 \\
      & $I(k_{t,T}^+)$ & 0.1970 & 0.1964 & 0.1954 & 0.1942 & 0.1927 & 0.1907 \\
      & Cov & -0.0005 & -0.0007 & -0.0013 & -0.0019 & -0.0028 & -0.0042 \\
      & $\frac{I(k_{t,T}^+) - I(k_{t,T}^-)}{(T-t)^{H+1/2}}$ & -0.0028 & -0.0028	& -0.0027	& -0.0027	& -0.0027	& -0.0026 \\
      & $\frac{\mathrm{Cov}}{(T-t)^{H+1/2}}$ & -0.0029	& -0.0029	& -0.0029	& -0.0029	& -0.0028	& -0.0028 \\
\hline
$0.3$ & $I(k_{t,T}^-)$ & 0.1989 & 0.1984 & 0.1972 & 0.1957 & 0.1936 & 0.1902 \\
      & $I(k_{t,T}^+)$ & 0.1986 & 0.1979 & 0.1962 & 0.1939 & 0.1905 & 0.1852 \\
      & Cov & -0.0003 & -0.0005 & -0.0011 & -0.0019 & -0.0033 & -0.0056 \\
      & $\frac{I(k_{t,T}^+) - I(k_{t,T}^-)}{(T-t)^{H+1/2}}$ & -0.0033	& -0.0033	& -0.0032	& -0.0032	& -0.0030	& -0.0029 \\
      & $\frac{\mathrm{Cov}}{(T-t)^{H+1/2}}$& -0.0034	& -0.0034	& -0.0034	& -0.0033	& -0.0033	& -0.0032 \\
\hline
$0.5$ & $I(k_{t,T}^-)$ & 0.1997 & 0.1994 & 0.1985 & 0.1970 & 0.1941 & 0.1883 \\
      & $I(k_{t,T}^+)$ & 0.1996 & 0.1991 & 0.1978 & 0.1955 & 0.1913 & 0.1833 \\
      & Cov & -0.0002 & -0.0003 & -0.0008 & -0.0016 & -0.0031 & -0.0059 \\
      & $\frac{I(k_{t,T}^+) - I(k_{t,T}^-)}{(T-t)^{H+1/2}}$ & -0.0031	& -0.0031	& -0.0031	& -0.0030	& -0.0028	& -0.0025 \\
      & $\frac{\mathrm{Cov}}{(T-t)^{H+1/2}}$ & -0.0032	& -0.0032	& -0.0032	& -0.0031	& -0.0031	& -0.0030 \\
\hline
$0.7$ & $I(k_{t,T}^-)$ & 0.1999 & 0.1998 & 0.1993 & 0.1981 & 0.1950 & 0.1870 \\
      & $I(k_{t,T}^+)$ & 0.1998 & 0.1996 & 0.1988 & 0.1969 & 0.1924 & 0.1821 \\
      & Cov & -0.0001 & -0.0002 & -0.0005 & -0.0012 & -0.0028 & -0.0060 \\
      & $\frac{I(k_{t,T}^+) - I(k_{t,T}^-)}{(T-t)^{H+1/2}}$ & -0.0028	& -0.0028	& -0.0028	& -0.0027	& -0.0026	& -0.0022 \\
      & $\frac{\mathrm{Cov}}{(T-t)^{H+1/2}}$ & -0.0028	& -0.0028	& -0.0028	& -0.0028	& -0.0028	& -0.0026 \\
\hline
$0.9$ & $I(k_{t,T}^-)$ & 0.2000 & 0.1999 & 0.1997 & 0.1988 & 0.1958 & 0.1858 \\
      & $I(k_{t,T}^+)$ & 0.1999 & 0.1998 & 0.1993 & 0.1978 & 0.1935 & 0.1810 \\
      & Cov & 0.0000 & -0.0001 & -0.0004 & -0.0009 & -0.0025 & -0.0060 \\
      & $\frac{I(k_{t,T}^+) - I(k_{t,T}^-)}{(T-t)^{H+1/2}}$ & -0.0025	& -0.0025	& -0.0025	& -0.0025	& -0.0023	& -0.0018 \\
      & $\frac{\mathrm{Cov}}{(T-t)^{H+1/2}}$ & -0.0025	& -0.0025	& -0.0025	& -0.0025	& -0.0025	& -0.0023 \\
\hline
\end{tabularx}%
\caption{Approximate error of covariance in $\rho=-0.4$}
\label{t2}
\end{table}

\begin{table}[H]
\centering
\newcolumntype{Y}{>{\centering\arraybackslash}X}
\newcolumntype{Z}{>{\raggedleft\arraybackslash}X}
\begin{tabularx}{\linewidth}{XXYYYYYY} \hline
H index & \multicolumn{1}{c}{Maturity} & $0.05$ & $0.1$ & $0.25$ & $0.5$ & $1$   & $2$ \\
\hline
$0.1$ & $I(k_{t,T}^-)$ & 0.1971 & 0.1967 & 0.1961 & 0.1954 & 0.1948 & 0.1939 \\
      & $I(k_{t,T}^+)$ & 0.1964 & 0.1957 & 0.1943 & 0.1928 & 0.1908 & 0.1881 \\
      & Cov & -0.0007 & -0.0011 & -0.0019 & -0.0028 & -0.0042 & -0.0062 \\
      & $\frac{I(k_{t,T}^+) - I(k_{t,T}^-)}{(T-t)^{H+1/2}}$ & -0.0042	& -0.0042	& -0.0041	& -0.0040	& -0.0040	& -0.0039 \\
      & $\frac{\mathrm{Cov}}{(T-t)^{H+1/2}}$ & -0.0043	& -0.0043	& -0.0043	& -0.0043	& -0.0042	& -0.0041 \\
\hline
$0.3$ & $I(k_{t,T}^-)$ & 0.1988 & 0.1982 & 0.1968 & 0.1952 & 0.1927 & 0.1889 \\
      & $I(k_{t,T}^+)$ & 0.1983 & 0.1974 & 0.1952 & 0.1925 & 0.1882 & 0.1816 \\
      & Cov & -0.0005 & -0.0008 & -0.0017 & -0.0029 & -0.0049 & -0.0082 \\
      & $\frac{I(k_{t,T}^+) - I(k_{t,T}^-)}{(T-t)^{H+1/2}}$ & -0.0050	& -0.0050	& -0.0048	& -0.0047	& -0.0045	& -0.0042 \\
      & $\frac{\mathrm{Cov}}{(T-t)^{H+1/2}}$ & -0.0051	& -0.0051	& -0.0050	& -0.0050	& -0.0049	& -0.0047 \\
\hline
$0.5$ & $I(k_{t,T}^-)$ & 0.1997 & 0.1993 & 0.1983 & 0.1966 & 0.1933 & 0.1868 \\
      & $I(k_{t,T}^+)$ & 0.1994 & 0.1989 & 0.1972 & 0.1944 & 0.1892 & 0.1794 \\
      & Cov & -0.0002 & -0.0005 & -0.0012 & -0.0023 & -0.0046 & -0.0086 \\
      & $\frac{I(k_{t,T}^+) - I(k_{t,T}^-)}{(T-t)^{H+1/2}}$ & -0.0047	& -0.0047	& -0.0046	& -0.0045	& -0.0042	& -0.0037 \\
      & $\frac{\mathrm{Cov}}{(T-t)^{H+1/2}}$ & -0.0047	& -0.0047	& -0.0047	& -0.0047	& -0.0046	& -0.0043 \\
\hline
$0.7$ & $I(k_{t,T}^-)$ & 0.1999 & 0.1998 & 0.1992 & 0.1978 & 0.1944 & 0.1854 \\
      & $I(k_{t,T}^+)$ & 0.1998 & 0.1995 & 0.1984 & 0.1961 & 0.1906 & 0.1782 \\
      & Cov & -0.0001 & -0.0003 & -0.0008 & -0.0018 & -0.0041 & -0.0087 \\
      & $\frac{I(k_{t,T}^+) - I(k_{t,T}^-)}{(T-t)^{H+1/2}}$ & -0.0042	& -0.0042	& -0.0042	& -0.0041	& -0.0038	& -0.0032 \\
      & $\frac{\mathrm{Cov}}{(T-t)^{H+1/2}}$ & -0.0043	& -0.0042	& -0.0042	& -0.0042	& -0.0041	& -0.0038 \\
\hline
$0.9$ & $I(k_{t,T}^-)$ & 0.2000 & 0.1999 & 0.1996 & 0.1986 & 0.1953 & 0.1842 \\
      & $I(k_{t,T}^+)$ & 0.1999 & 0.1998 & 0.1991 & 0.1972 & 0.1919 & 0.1771 \\
      & Cov & -0.0001 & -0.0002 & -0.0005 & -0.0014 & -0.0036 & -0.0087 \\
      & $\frac{I(k_{t,T}^+) - I(k_{t,T}^-)}{(T-t)^{H+1/2}}$ & -0.0038	& -0.0038	& -0.0037	& -0.0037	& -0.0034	& -0.0027 \\
      & $\frac{\mathrm{Cov}}{(T-t)^{H+1/2}}$ & -0.0038	& -0.0038	& -0.0038	& -0.0038	& -0.0036	& -0.0033 \\
\hline
\end{tabularx}%
\caption{Approximate error of covariance in $\rho=-0.6$}
\label{t3}
\end{table}

\begin{table}[H]
\centering
\newcolumntype{Y}{>{\centering\arraybackslash}X}
\newcolumntype{Z}{>{\raggedleft\arraybackslash}X}
\begin{tabularx}{\linewidth}{XXYYYYYY} \hline
H index & \multicolumn{1}{c}{Maturity} & $0.05$ & $0.1$ & $0.25$ & $0.5$ & $1$   & $2$ \\
\hline
$0.1$ & $I(k_{t,T}^-)$ & 0.1966 & 0.1961 & 0.1954 & 0.1946 & 0.1939 & 0.1929 \\
      & $I(k_{t,T}^+)$ & 0.1957 & 0.1948 & 0.1931 & 0.1912 & 0.1887 & 0.1853 \\
      & Cov & -0.0010 & -0.0014 & -0.0025 & -0.0037 & -0.0056 & -0.0082 \\
      & $\frac{I(k_{t,T}^+) - I(k_{t,T}^-)}{(T-t)^{H+1/2}}$ & -0.0056	& -0.0055	& -0.0054	& -0.0053	& -0.0052	& -0.0050 \\
      & $\frac{\mathrm{Cov}}{(T-t)^{H+1/2}}$ & -0.0058	& -0.0057	& -0.0057	& -0.0056	& -0.0056	& -0.0054 \\
\hline
$0.3$ & $I(k_{t,T}^-)$ & 0.1986 & 0.1979 & 0.1963 & 0.1943 & 0.1915 & 0.1871 \\
      & $I(k_{t,T}^+)$ & 0.1980 & 0.1968 & 0.1942 & 0.1908 & 0.1856 & 0.1778 \\
      & Cov & -0.0006 & -0.0011 & -0.0022 & -0.0038 & -0.0064 & -0.0107 \\
      & $\frac{I(k_{t,T}^+) - I(k_{t,T}^-)}{(T-t)^{H+1/2}}$ & -0.0067	& -0.0066	& -0.0064	& -0.0062	& -0.0059	& -0.0054 \\
      & $\frac{\mathrm{Cov}}{(T-t)^{H+1/2}}$ & -0.0068	& -0.0067	& -0.0067	& -0.0066	& -0.0064	& -0.0061 \\
\hline
$0.5$ & $I(k_{t,T}^-)$ & 0.1996 & 0.1992 & 0.1981 & 0.1961 & 0.1923 & 0.1848 \\
      & $I(k_{t,T}^+)$ & 0.1993 & 0.1986 & 0.1965 & 0.1932 & 0.1868 & 0.1753 \\
      & Cov & -0.0003 & -0.0006 & -0.0016 & -0.0031 & -0.0060 & -0.0112 \\
      & $\frac{I(k_{t,T}^+) - I(k_{t,T}^-)}{(T-t)^{H+1/2}}$ & -0.0063	& -0.0062	& -0.0061	& -0.0059	& -0.0055	& -0.0048 \\
      & $\frac{\mathrm{Cov}}{(T-t)^{H+1/2}}$ & -0.0063	& -0.0063	& -0.0063	& -0.0062	& -0.0060	& -0.0056 \\
\hline
$0.7$ & $I(k_{t,T}^-)$ & 0.1999 & 0.1997 & 0.1991 & 0.1975 & 0.1935 & 0.1833 \\
      & $I(k_{t,T}^+)$ & 0.1997 & 0.1994 & 0.1980 & 0.1952 & 0.1885 & 0.1739 \\
      & Cov & -0.0002 & -0.0004 & -0.0011 & -0.0024 & -0.0054 & -0.0114 \\
      & $\frac{I(k_{t,T}^+) - I(k_{t,T}^-)}{(T-t)^{H+1/2}}$ & -0.0057	& -0.0056	& -0.0056	& -0.0054	& -0.0050	& -0.0041 \\
      & $\frac{\mathrm{Cov}}{(T-t)^{H+1/2}}$ & -0.0057	& -0.0057	& -0.0056	& -0.0056	& -0.0054	& -0.0049 \\
\hline
$0.9$ & $I(k_{t,T}^-)$ & 0.2000 & 0.1999 & 0.1996 & 0.1984 & 0.1945 & 0.1819 \\
      & $I(k_{t,T}^+)$ & 0.1999 & 0.1997 & 0.1988 & 0.1966 & 0.1901 & 0.1728 \\
      & Cov & -0.0001 & -0.0002 & -0.0007 & -0.0019 & -0.0048 & -0.0113 \\
      & $\frac{I(k_{t,T}^+) - I(k_{t,T}^-)}{(T-t)^{H+1/2}}$ & -0.0050	& -0.0050	& -0.0050	& -0.0049	& -0.0045	& -0.0034 \\
      & $\frac{\mathrm{Cov}}{(T-t)^{H+1/2}}$ & -0.0050	& -0.0050	& -0.0050	& -0.0050	& -0.0048	& -0.0043 \\
\hline
\end{tabularx}%
\caption{Approximate error of covariance in $\rho=-0.8$}
\label{t4}
\end{table}

In addition to the tables, in Figure \ref{fig1} we plot the ratios $(I(k_{t,T}^+) - I(k_{t,T}^-))/\mathrm{Cov}$ for various values of $H$ and $T-t$ while keeping correlation fixed at $\rho = -0.8$. In Figure \eqref{fig2} the Hurst parameter is fixed $H=0.3$ and $T-t$ and $\rho$ are varied. We observe that the Hurst parameter and time to maturity has a larger impact on accuracy than correlation. The sensitivity of the accuracy to the Hurst parameter is consistent with the fact that the error of the approximation in Corollary \ref{covar} is $O((T-t)^{2H+1})$ (see proof of Theorem \ref{mainresult} in Appendix). For small values of $H$ the error is less for $T-t >1$ but converges slower for $T-t <1$.

\begin{figure}[H]
\centering
\begin{tikzpicture}[scale=0.8]
\begin{axis}[
	legend style={font=\small},
	legend pos= south west,
	xlabel= $T-t$, 
	ylabel= $(I(k_{t,T}^+) - I(k_{t,T}^-))/\mathrm{Cov}$,
	ymin = .75,
	title= {$\rho = -0.8$},
	grid=both,
	minor grid style={gray!25},
	major grid style={gray!25},
	width=0.75\linewidth,
	y tick label style={
        /pgf/number format/.cd,
        fixed,
        fixed zerofill,
        precision=2,
        /tikz/.cd
    },]
	
\addlegendimage{only marks, mark=x}
\addlegendimage{only marks, mark=o}
\addlegendimage{only marks, mark=+}

\addplot[mark = x, color=black] 
	table[x=x, y=y1, col sep=tab]{fig1.txt};
\addlegendentry{$H=0.1$};	

\addplot[mark = o, color=black] 
	table[x=x, y=y2, col sep=tab]{fig1.txt};
\addlegendentry{$H=0.3$};	

\addplot[mark = +, color=black] 
	table[x=x, y=y3, col sep=tab]{fig1.txt};
\addlegendentry{$H=0.5$};	

\end{axis}
\end{tikzpicture}
\caption{Impact of $H$ and $T-t$ on covariance estimate accuracy}
\label{fig1}
\end{figure}
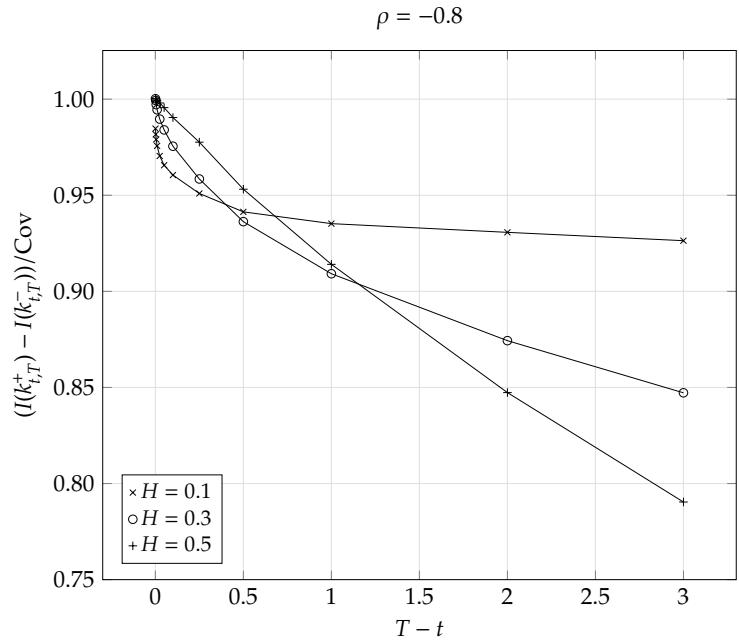

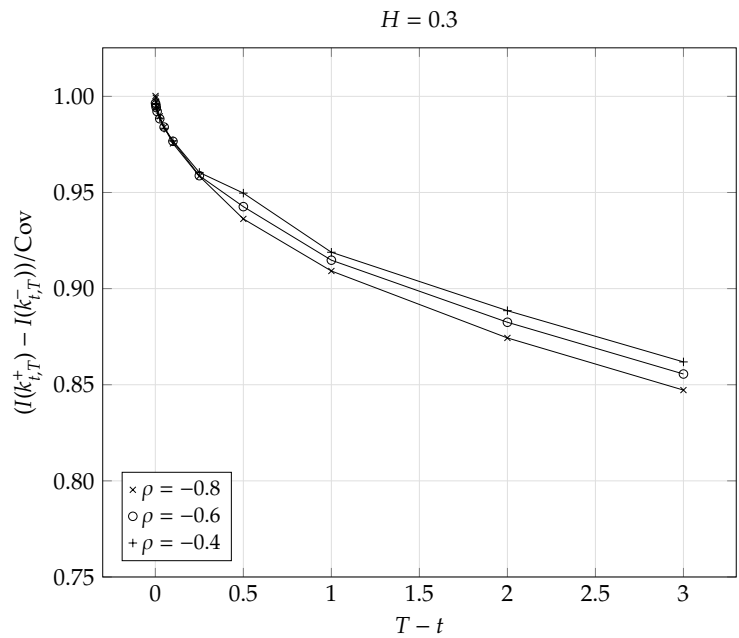
\begin{figure}[H]
\centering
\begin{tikzpicture}[scale=0.8]
\begin{axis}[
	legend style={font=\small},
	legend pos= south west,
	xlabel= $T-t$, 
	ylabel= $(I(k_{t,T}^+) - I(k_{t,T}^-))/\mathrm{Cov}$,
	ymin = .75,
	title= {$H = 0.3$},
	grid=both,
	minor grid style={gray!25},
	major grid style={gray!25},
	width=0.75\linewidth,
	y tick label style={
        /pgf/number format/.cd,
        fixed,
        fixed zerofill,
        precision=2,
        /tikz/.cd
    },]
	
\addlegendimage{only marks, mark=x}
\addlegendimage{only marks, mark=o}
\addlegendimage{only marks, mark=+}

\addplot[mark = x, color=black] 
	table[x=x, y=y1, col sep=tab]{fig2.txt};
\addlegendentry{$\rho=-0.8$};	

\addplot[mark = o, color=black] 
	table[x=x, y=y2, col sep=tab]{fig2.txt};
\addlegendentry{$\rho=-0.6$};	

\addplot[mark = +, color=black] 
	table[x=x, y=y3, col sep=tab]{fig2.txt};
\addlegendentry{$\rho=-0.4$};	

\end{axis}
\end{tikzpicture}
\caption{Impact of $\rho$ and $T-t$ on covariance estimate accuracy}
\label{fig2}
\end{figure}

Lastly we examine the accuracy of approximation \eqref{Happrox} by plotting the estimated value of $H$ as given by expression \eqref{Happrox} divided by its exact value. Equation \eqref{Happrox} requires two value for time to maturity. In the plots we have fixed $T_1 = 0.0025$ and let $T_2 \in \{0.005, 0.01, 0.025, 0.05, 0.1, 0.25, 0.5, 1, 2, 3 \}$. In figure \ref{fig3} the accuracy is shown for $\rho = -0.8$ and various values for $H$ and $T_2 -t$. In figure \ref{fig4} we fix $H=0.3$ and plot the ratio for different $\rho$ and $T_2-t$. We observe that for especially for short maturities the simple approximation \eqref{Happrox} is accurate.

\begin{figure}[H]
\centering
\begin{tikzpicture}[scale=.8]
\begin{axis}[
	legend style={font=\small},
	legend pos= south west,
	xlabel= $T-t$, 
	ylabel= Eq. \eqref{Happrox}/Exact value,
	ymin = .9,
	title= {$\rho = -0.8$},
	grid=both,
	minor grid style={gray!25},
	major grid style={gray!25},
	width=0.75\linewidth,
	y tick label style={
        /pgf/number format/.cd,
        fixed,
        fixed zerofill,
        precision=2,
        /tikz/.cd
    },]
	
\addlegendimage{only marks, mark=x}
\addlegendimage{only marks, mark=o}
\addlegendimage{only marks, mark=+}

\addplot[mark = x, color=black] 
	table[x=x, y=y1, col sep=tab]{fig3.txt};
\addlegendentry{$H=0.1$};	

\addplot[mark = o, color=black] 
	table[x=x, y=y2, col sep=tab]{fig3.txt};
\addlegendentry{$H=0.3$};	

\addplot[mark = +, color=black] 
	table[x=x, y=y3, col sep=tab]{fig3.txt};
\addlegendentry{$H=0.5$};	

\end{axis}
\end{tikzpicture}
\caption{Impact of $H$ and $T-t$ on $H$ estimate accuracy}
\label{fig3}
\end{figure}
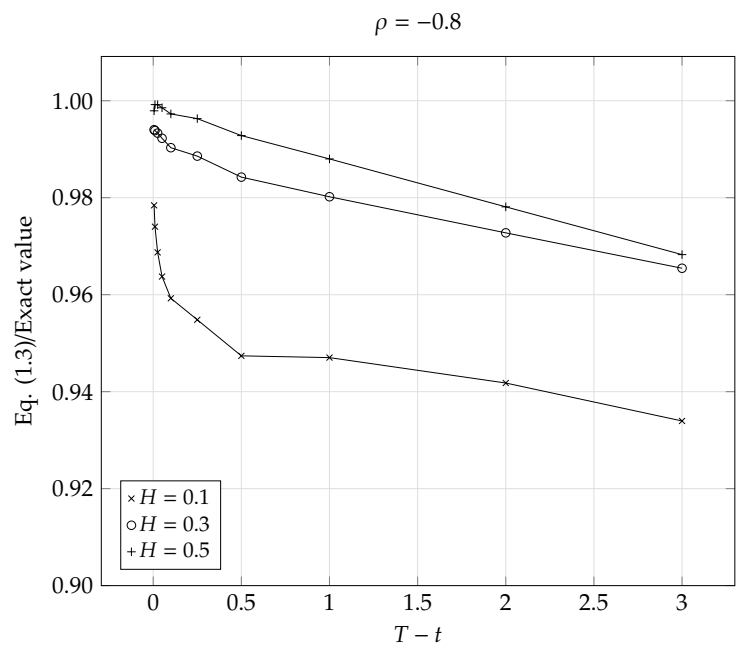

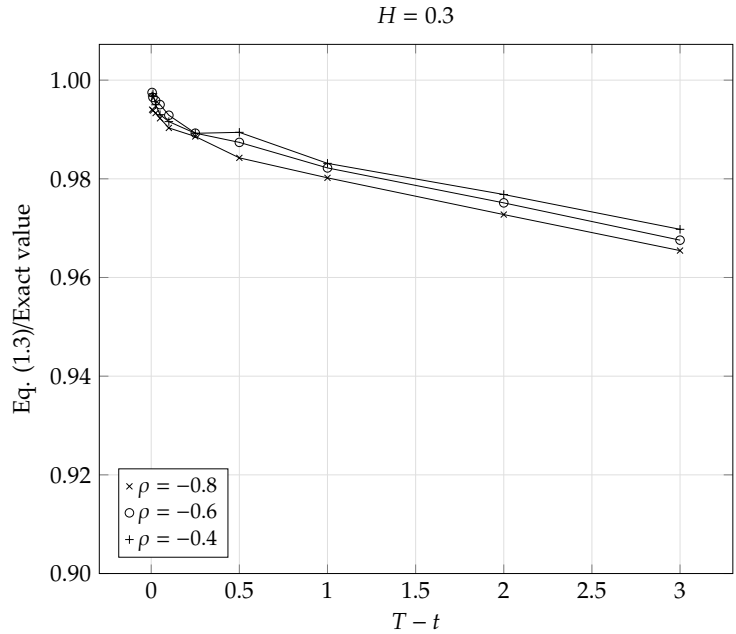
\begin{figure}[H]
\centering
\begin{tikzpicture}[scale=0.8]
\begin{axis}[
	legend style={font=\small},
	legend pos= south west,
	xlabel= $T-t$, 
	ylabel= Eq. \eqref{Happrox}/Exact value,
	ymin = .9,
	title= {$H=0.3$},
	grid=both,
	minor grid style={gray!25},
	major grid style={gray!25},
	width=0.75\linewidth,
	y tick label style={
        /pgf/number format/.cd,
        fixed,
        fixed zerofill,
        precision=2,
        /tikz/.cd
    },]
	
\addlegendimage{only marks, mark=x}
\addlegendimage{only marks, mark=o}
\addlegendimage{only marks, mark=+}

\addplot[mark = x, color=black] 
	table[x=x, y=y1, col sep=tab]{fig4.txt};
\addlegendentry{$\rho=-0.8$};	

\addplot[mark = o, color=black] 
	table[x=x, y=y2, col sep=tab]{fig4.txt};
\addlegendentry{$\rho=-0.6$};	

\addplot[mark = +, color=black] 
	table[x=x, y=y3, col sep=tab]{fig4.txt};
\addlegendentry{$\rho=-0.4$};	

\end{axis}
\end{tikzpicture}
\caption{Impact of $\rho$ and $T-t$ on $H$ estimate accuracy}
\label{fig4}
\end{figure}

\section{Conclusion}
In this paper we have derived rigorous limit theorems for the approximation \eqref{rolloos}. In addition we have noted that it can be used to estimate the value of the Hurst parameter as given by expression \eqref{Happrox}. Numerical runs have confirmed, under the rough Bergomi model, that the approximations are accurate for short times to maturity for different values of the Hurst parameter and correlation. The results are not only of theoretical interest, but also of practical interest. They can, for instance, be used to calibrate or estimate the Hurst parameter from a limited number of short dated options.

\appendix

\section{Greeks}\label{appendix2}

This section shows the proofs of Propositions and Theorems in Section \ref{sec3}.
Firstly, we give some Greeks of Black-Scholes formula.

A direct calculation gives us that  $k\in\mathbb{R}$ and all $u>0$:
$$
(BS^{-1})'(k,u)=\frac{1}{\frac{\partial BS}{\partial\sigma}(k,BS^{-1}(k,u))}.
$$
Then it follows that 
\begin{eqnarray}
(BS^{-1})''(k,u)&=&-\frac{1}{(\frac{\partial BS}{\partial\sigma}(k,BS^{-1}(k,u)))^2}\frac{\partial^2 BS}{\partial\sigma^2}(k,BS^{-1}(k,u))\frac{1}{\frac{\partial BS}{\partial\sigma}(k,BS^{-1}(k,u))}\nonumber\\
&=&-\frac{1}{(\frac{\partial BS}{\partial\sigma}(k,BS^{-1}(k,u)))^3}\frac{\partial^2 BS}{\partial\sigma^2}(k,BS^{-1}(k,u)).
\end{eqnarray}
Now, the classical relationship between the {\it Vomma} and the {\it Vega}
$$
\frac{\partial^2 BS}{\partial\sigma^2}(k,\sigma)=\frac{\partial BS}{\partial\sigma}(k,\sigma)\frac{d_1(k,\sigma)d_2(k,\sigma)}{\sigma}
$$
allows us to write
$$
(BS^{-1})''(k,u)=\frac{1}{(\frac{\partial BS}{\partial\sigma}(k,BS^{-1}(k,u)))^2}\frac{(BS^{-1}(k,u))^4(T-t)^2-4(X_t-k)^2}{4(BS^{-1}(k,u))^3(T-t)}.
$$
Finally, as
$$\frac{\partial BS}{\partial\sigma}(k,BS^{-1}(k,u))=
\exp(X_{t})N^{\prime }(d_1\left( k,BS^{-1}(k,u)\right))\sqrt{T-t},$$ the above equality reduces to
\begin{equation}
\label{greek}
(BS^{-1})''(k,u)=\frac{(BS^{-1}(k,u))^4(T-t)^2-4(X_t-k)^2}{4(\exp(X_{t})N^{\prime }(d_1\left( k,BS^{-1}(k,u)\right))(T-t))^2(BS^{-1}(k,u))^3}.
\end{equation}

\section{Proofs}\label{appendix3}

\begin{proof}[Proof of Proposition \ref{newdecomposition}] 
Proposition \ref{Theoremcorrelatedcase} gives us that
\begin{eqnarray*}
\lefteqn{I\left( t,T,X_{t},k\right) -E_{t}\left[ v_{t}\right]}  \\
&=&I^{0}\left( t,T,X_{t},k\right) -E_{t}\left[ v_{t}\right]  \\
&&+\frac{\rho }{2}E_{t}\int_{t}^{T}\left( BS^{-1}\right) ^{\prime }\left(
k,\Gamma _{s}\right) H(s,X_{s},k,v_{s})\Phi _{s}ds \\
&=&T_{1}+T_{2}.
\end{eqnarray*}%
Now we
apply the anticipating It\^{o}'s formula (see for example Nualart (2005)) to
the process%
\[
H(t,X_{t},k,v_{t})U_{t}(k),
\]%
and we get%
\begin{eqnarray*}
0 &=&E_{t}\Bigg[ H(t,X_{t},k,v_{t})U_{t}(k) \\
&&-\int_{t}^{T}H(s,X_{s},k,v_{s})\left( BS^{-1}\right) ^{\prime
}\left( k,\Gamma _{s}\right) \Phi _{s}ds \\
&&+\int_{t}^{T}\frac{\partial H}{\partial s}(s,X_{s},k,v_{s})U_{s}(k)ds
\\
&&+\frac{\rho}{2}\int_{t}^{T}D_{s}^{W}\left( \frac{\partial H}{\partial x}%
(s,X_{s},k,v_{s})U_{s}(k)\right) \sigma _{s}ds \\
&&+\frac{1}{2}\int_{t}^{T}\frac{\partial ^{2}H}{\partial x^{2}}%
(s,X_{s},k,v_{s})\sigma _{s}^{2}U_{s}(k)ds \\
&& +\frac{1}{2}\int_{t}^{T}\left( \frac{\partial ^{2}}{\partial x^{2}}-%
\frac{\partial }{\partial x}\right) H(s,X_{s},k,v_{s})\left(
v_{s}^{2}-\sigma _{s}^{2}\right) U_{s}(k)ds\Bigg],
\end{eqnarray*}%
which implies that%
\begin{eqnarray*}
T_{2} &=&\frac{\rho }{2}E_{t}\Bigg[ H(t,X_{t},k,v_{t})U_{t}
\\
&&+\frac{\rho}{2}\int_{t}^{T}D_{s}^{W}\left( \frac{\partial H}{%
\partial x}(s,X_{s},k,v_{s})U_{s}(k)\right) \sigma _{s}ds\Bigg]  \\
&=&T_{2}^{1}+T_{2}^{2}.
\end{eqnarray*}%
Now, notice that
\begin{eqnarray*}
T_{2}^{2} &=&\frac{\rho ^{2}}{4}E_{t}\left[ \int_{t}^{T}\frac{\partial }{%
\partial x}\left( \frac{\partial ^{2}}{\partial x^{2}}-\frac{\partial }{%
\partial x}\right) H(s,X_{s},k,v_{s})\sigma _{s}\left(
\int_{s}^{T}D_{s}^{W}\sigma _{r}^{2}dr\right) U_{s}(k)ds\right.  \\
&&\left. +\int_{t}^{T}\frac{\partial }{\partial x}H(s,X_{s},k,v_{s})\left( \int_{s}^{T}\left( BS^{-1}\right) ^{\prime }\left( k,\Gamma _{r}\right) \left( D_{s}^{W}\Phi _{r}\right) dr\right) \sigma _{s}ds%
\right].
\end{eqnarray*}%
Now the proof is complete.
\end{proof}

\begin{proof}[Proof of Theorem \ref{mainresult}] 
Proposition \ref{newdecomposition} allows us to write
\begin{eqnarray}
\lefteqn{I\left( t,T,X_{t},k^+_{t,T}\right) -I\left( t,T,X_{t},k^-_{t,T}\right)}\nonumber  \\
&=&I^{0}
( t,T,X_{t},k^+_{t,T})
-I^{0}( t,T,X_{t},k^-_{t,T}) \nonumber\\
&&+\frac{\rho }{2}E_{t}\left[ H(t,X_{t},k^+_{t,T},v_{t})U_{t}(k_t^+)-H(t,X_{t},k^-_{t,T},v_{t})U_{t}(k_t^-)\right]\nonumber\\
&&+\frac{\rho ^{2}}{4}E_{t}\left[ \int_{t}^{T}\frac{\partial }{%
\partial x}\left( \frac{\partial ^{2}}{\partial x^{2}}-\frac{\partial }{%
\partial x}\right) H(s,X_{s},k^+_{t,T},v_{s})\sigma _{s}\left(
\int_{s}^{T}D_{s}^{W}\sigma _{r}^{2}dr\right) U_{s}(k^+_{t,T})ds\right.  \nonumber\\
&&-\int_{t}^{T}\frac{\partial }{%
\partial x}\left( \frac{\partial ^{2}}{\partial x^{2}}-\frac{\partial }{%
\partial x}\right) H(s,X_{s},k^-_{t,T},v_{s})\sigma _{s}\left(\int_{s}^{T}D_{s}^{W}\sigma _{r}^{2}dr\right) U_{s}(k^-_{t,T})ds\nonumber\\
&& +2\int_{t}^{T}\frac{\partial }{\partial x}H(s,X_{s},k^+_{t,T},v_{s})\left( \int_{s}^{T}\left( BS^{-1}\right) ^{\prime }\left(k^+_{t,T},\Gamma _{r}\right) \left( D_{s}^{W}\Phi _{r}\right) dr\right) \sigma _{s}ds\nonumber\\
&&\left. -2\int_{t}^{T}
\frac{\partial }{\partial x}H(s,X_{s},k^-_{t,T},v_{s})\left( \int_{s}^{T}\left( BS^{-1}\right) ^{\prime }\left( k^-_{t,T},\Gamma _{r}\right) \left( D_{s}^{W}\Phi _{r}\right) dr\right) \sigma _{s}ds\right].
\label{superdecomposition}
\end{eqnarray}%

Now the proof is decomposed into several steps.

{\it Step 1} By making use of Proposition 3.1 from Renault and Touzi \cite{RT} it can readily be seen that when $\rho=0$:
$$I^0( t,T,X_t,k^+_{t,T}) = I^0( t,T,X_t,k^-_{t,T}).$$

{\it Step 2} As 
\[
H(t,X_t,k,v_t)=\frac{e^{X_t}N^{\prime }(d_{+}\left(
k,v_t\right) )}{v_t\sqrt{T-t}}\left( 1-\frac{d_{+}\left( k,v_t\right) 
}{v_t\sqrt{T-t}}\right) .
\]%
it follows that 
\[
H(t,X_t,k^+_{t,T},v_t)=\frac{e^{X_t}N^{\prime }(d_{+}\left(
k^+_{t,T},v_t\right) )}{2v_t\sqrt{T-t}}
\left( \frac{1}{v_t^2(T-t)} \left(v_t^2+I(k^+_{t,T})^2\right)  (T-t)\right),
\]%
and
\[
H(t,X_t,k^-_{t,T},v_t)=\frac{e^{X_t}N^{\prime }(d_{+}\left(
k^-_{t,T},v_t\right) )}{2v_t\sqrt{T-t}}
\left( \frac{1}{v_t^2(T-t)} \left(v_t^2-I(k^-_{t,T})^2\right)  (T-t)\right).
\]%
Moreover, 
$$
(BS^{-1})'(k,u)=\frac{1}{\frac{\partial BS}{\partial\sigma}(k,BS^{-1}(k,u))}=\frac{1}{\exp(X_{t})N^{\prime }(d_+\left( k,BS^{-1}(k,u)\right))\sqrt{T-t}}.
$$
Then
\begin{eqnarray}
\lefteqn{\lim_{T\to t} \frac{\rho}{2}\frac{H(t,X_{t},k^+_{t,T},v_t)U_t(k^+_{t,T})}{(T-t)^{H+\frac12}}}
\nonumber\\
&=&\lim_{T\to t}\frac{\rho}{2}\frac{e^{X_{s}}N^{\prime }(d_{+}\left(
k^+_{t,T},v_t\right) )}{2v_t\sqrt{T-t}}
\left( \frac{1}{v_t^2(T-t)} \left(v_t^2+I(k^+_{t,T})^2\right)  (T-t)\right)\nonumber\\
&&\times\int_{t}^{T}\left( BS^{-1}\right) ^{\prime }\left( k^+_{t,T},\Gamma _{s}\right) \Phi _{s}ds\nonumber\\
&=&\frac{\rho }{2}\lim_{T\rightarrow t}\frac{1}{(T-t)^{H+\frac{3}{2}}}%
\int_{t}^{T}\frac{1}{v_{s}}\left( \sigma _{s}\int_{s}^{T}D_{s}^{W}\sigma
_{r}^{2}dr\right) ds \nonumber\\
&=&\frac{\rho }{2}\lim_{T\rightarrow t}\frac{1}{(T-t)^{H+\frac32}}%
\int_{t}^{T}\left( \int_{s}^{T}D_{s}^{W}\sigma
_{r}^{2}dr\right) ds
\end{eqnarray}
while
\begin{eqnarray}
\lim_{T\to t} \frac{\rho}{2}\frac{H(t,X_{t},k^-_{t,T},v_t)U_t(k^-_{t,T})}{(T-t)^{H+\frac12}}=0.
\end{eqnarray}
{\it Step 3} A direct computation gives us that
\begin{eqnarray*}
\lefteqn{\frac{\partial }{\partial x}\left( \frac{\partial ^{2}}{\partial x^{2}}-%
\frac{\partial }{\partial x}\right) H(t,X_{t},k^+_{t,T},v_{t})} \\
&=&\frac{e^x}{16}\frac{N^{\prime }(d_{+}\left( k^+_{t,T},v_{s}\right) ) }{ v_{t}^9(T-t)^{\frac52}}\\
&\times&
\left( (T-t)^{2}(I(k^+_{t,T})^8+2I(k^+_{t,T})^6 v_t^2
-2I(k^+_{t,T})^2 v_t^6-v_t^8)
-24(T-t)(I(k^+_{t,T})^4 v_t^2+I(k^+_{t,T})^2 v_t^4)
+48v_t^4\right),
\end{eqnarray*}
and
\begin{eqnarray*}
\lefteqn{\frac{\partial }{\partial x}\left( \frac{\partial ^{2}}{\partial x^{2}}-%
\frac{\partial }{\partial x}\right) H(t,X_{t},k^-_{t,T},v_{t})} \\
&=&\frac{e^x}{16}\frac{N^{\prime }(d_{-}\left( k^-_{t,T},v_t\right) ) }{ v_{t}^9(T-t)^{\frac52}}\\
&\times&
\left( (T-t)^{2}(I(k^-_{t,T})^8-2I(k^-_{t,T})^6 v_t^2
+2I(k^-_{t,T})^2 v_t^6-v_t^8)
+24(T-t)(I(k^-_{t,T})^4 v_t^2-I(k^-_{t,T})^2 v_t^4)
+48v_t^4\right).
\end{eqnarray*}
Notice that the leading term as $T\to t$ is the same in both expressions. Moreover, they appear in (\ref{superdecomposition}) with different sign. Then straightforward computations allow us to see that the sum of the third and the fourth terms in (\ref{superdecomposition}) is of order $O(T-t)^{2H+1}$.

{\it Step 4} Notice that
\[
\frac{\partial H}{\partial x}(t,X_{t},k^+_{t,T},v_{t})=\frac{1}{4}\frac{%
e^{X_t}N^{\prime }(d_{-}\left( k^+_{t,T},v_t\right) )}{v_t^5\left(\sqrt{T-t}%
\right) ^{3}}\left( (v_t^{2}-I(k^+_{t,T})^2)^2(T-t)-4v_t^2\right) ,
\]
and
\[
\frac{\partial H}{\partial x}(t,X_{t},k^-_{t,T},v_{t})=\frac{1}{4}\frac{%
e^{X_t}N^{\prime }(d_{-}\left( k^-_{t,T},v_t\right) )}{v_t^5\left(\sqrt{T-t}%
\right) ^{3}}\left( (v_t^{2}+I(k^-_{t,T})^2)^2(T-t)-4v_t^2\right).
\]
Notice that, as in Step 3, the leading terms are the same, and they appear in (\ref{superdecomposition}) with different sign. This allows us to see that the last two terms in (\ref{superdecomposition}) are of order $O(T-t)^{2H+1}$.

{\it Step 5} Finally, the results in Steps 1, 2, and 3, together with (\ref{superdecomposition})  allow us to complete the proof.
\end{proof}


\begin{thebibliography}{9}




\bibitem{ALV} Al\`{o}s, E., Le\'{o}n, J. A. and Vives, J., ``On the short-time behavior of the implied volatility for jump-diffusion models with stochastic volatility'', Finance and Stochastics, (2007).

\bibitem{AS} Al\`{o}s, E. and Shiraya, K., ``Estimating the Hurst parameter from short term volatility swaps: a Malliavin calculus approach", Finance and Stochastics, (2019).

\bibitem{BFW} Backus, D., Foresi, S. and  Wu, L., ``Accounting for biases in Black-Scholes", Available at SSRN: https://ssrn.com/abstract=585623.





\bibitem{CR} Comte, F. and Renault, E., ``Long memory in continuous-time stochastic volatility models", Mathematical Finance, (1998).





\bibitem{F1} Fukasawa, M., ``Asymptotic analysis for stochastic volatility: martingale expansion'', Finance and Stochastics, (2011).

\bibitem{F2} Fukasawa, M., ``Short-time at-the-money skew and rough fractional volatility'', Quantitative Finance, (2017).




\bibitem{GJR} Gatheral, J., Jaisson, T. and Rosenbaum, M., ``Volatility is rough'', Quantitative Finance, (2018).

\bibitem{HW} Hull, J. and White, A., ``The pricing of options on assets with stochastic volatilities", Journal of Finance, (1987).



\bibitem{RT} Renault, E. and Touzi, N., ``Option hedging and implied volatilities in a stochastic volatility model", Mathematical Finance, (1996).

\bibitem{R1} Rolloos, F., ``The future of skew", Risk, (2022).


\bibitem{RoT} Romano, M. and Touzi, N., ``Contingent claims and market completeness in a stochastic volatility model'', Mathematical Finance, (1997).

\bibitem{W} Willard, G.A., ``Calculating prices and sensitivities for path-independent securities in multifactor models'', Journal of Derivatives, (1997).

\end{thebibliography}
\end{document}